\newcommand{\ottnt}[1]{\mathit{#1}}
\newcommand{\ottmv}[1]{\mathit{#1}}
\newcommand{\ottkw}[1]{\mathbf{#1}}
\newcommand{\ottsym}[1]{#1}
\renewcommand{\ottkw}[1]{\mathtt{#1} }
\renewcommand{\ottnt}[1]{#1}
\renewcommand{\ottmv}[1]{#1}
\renewcommand{\ottsym}[1]{#1}
\spnewtheorem*{convention}{Convention}{\bfseries}{\rm}
\spnewtheorem*{notation}{Notation}{\bfseries}{\rm}
\providecommand{\msansp}{\textsf{p}}
\providecommand{\msansc}{\textsf{c}}
\providecommand{\nin}{\mathrel{\not\in}}
\providecommand{\colcoleq}{\mathrel{::=}}
\providecommand{\eqdef}{\mathrel{:=}}
\providecommand{\lAngle}{\langle\!\langle}
\providecommand{\rAngle}{\rangle\!\rangle}
\providecommand{\mscrE}{\mathcal{E}}
\newcommand{\PCFvD}{PCFv\(\Delta_\mathrm{H}\)}
\newcommand{\hatched}[1]{{\setlength{\fboxsep}{2pt}\colorbox{lightgray}{\vphantom{(}\(#1\)}}}
\newcommand{\secref}[1]{\autoref{sec:#1}}
\newcommand{\figref}[1]{\autoref{fig:#1}}
\newcommand{\propref}[1]{\autoref{prop:#1}}
\newcommand{\NOTE}[1]{}
\newcommand{\AI}[1]{}
\begin{document}
\title{Manifest Contracts with Intersection Types}
\author{Yuki Nishida
\and Atsushi Igarashi\orcidID{0000-0002-5143-9764}}
\institute{Graduate School of Informatics, Kyoto University, Kyoto, Japan \\
  \email{\{nishida,igarashi\}@fos.kuis.kyoto-u.ac.jp}}

\maketitle

\begin{abstract}
  We present a \emph{manifest contract system} \PCFvD{} with
\emph{intersection types}.  A manifest contract system is a typed
functional calculus in which software contracts are integrated into a
refinement type system and consistency of contracts is checked by
combination of compile- and run-time type checking.  Intersection
types naturally arise when a contract is expressed by a conjunction of
smaller contracts.  Run-time contract checking for conjunctive
higher-order contracts in an untyped language has been studied but our
typed setting poses an additional challenge due to the fact that an
expression of an intersection type $\tau_{{\mathrm{1}}}  \wedge  \tau_{{\mathrm{2}}}$ may have to perform
different run-time checking whether it is used as $\tau_{{\mathrm{1}}}$ or
$\tau_{{\mathrm{2}}}$.

We build \PCFvD{} on top of the \(\Delta\)-calculus, a Church-style
intersection type system by Liquori and Stolze.  In the
\(\Delta\)-calculus, a canonical expression of an intersection type is
a \emph{strong pair}, whose elements are the same expressions except
for type annotations.  To address the challenge above, we relax strong
pairs so that expressions in a pair are the same except for type
annotations and casts, which are a construct for run-time checking.

We give a formal definition of \PCFvD{} and show its basic properties
as a manifest contract system: preservation, progress, and value
inversion.  Furthermore, we show
that run-time checking does not affect essential computation.


\end{abstract}

\section{Introduction}\label{sec:introduction}
\emph{Manifest contract systems}~\cite{GreenbergPW2010, WadlerF2009,
  SekiyamaIG2017, SekiyamaI2017, SekiyamaNI2015, BeloGIP2011, Greenberg2015,
  GronskiKTFF2006, Flanagan2006, KnowlesF2010, NishidaI2018}, which are typed
functional calculi, are one discipline handling \emph{software
  contracts}~\cite{Meyer1997}.  The distinguishing feature of manifest contract
systems is that they integrate contracts into a type system and guarantee some
sort of satisfiability against contracts in a program as type soundness.
%
%
Specifically, a contract is embedded into a type by means of \emph{refinement
  types} of the form \(\ottsym{\{}   x \mathord: \tau   \mid  M  \ottsym{\}}\), which represents the subset of the
\emph{underlying type} \(\tau\) such that the values in the subset satisfy the
\emph{predicate} \(M\), which can be an arbitrary Boolean expression in the
programming language.  Using the refinement types, for example, we can express
the contract of a division function, which would say ``... the divisor shall
not be zero ...'', by the type \(\ottkw{int}  \rightarrow  \ottsym{\{}   x \mathord: \ottkw{int}   \mid   x  \neq  0   \ottsym{\}}  \rightarrow  \ottkw{int}\).  In addition to
the refinement types, manifest contract systems are often equipped with \emph{dependent
  function types} in order to express more detailed contracts.  A dependent
function type, written \(\ottsym{(}   x \mathord: \sigma   \ottsym{)}  \rightarrow  \tau\) in this paper, is a type of a function
which takes one argument of the type \(\sigma\) and returns a value of the type
\(\tau\); the distinguished point from ordinary function types is that
\(\tau\) can refer to the given argument represented by \(x\).  Hence, for
example, the type of a division function can be made more specific like
\(\ottsym{(}   x \mathord: \ottkw{int}   \ottsym{)}  \rightarrow  \ottsym{(}   y \mathord: \ottsym{\{}   x' \mathord: \ottkw{int}   \mid   x'  \neq  0   \ottsym{\}}   \ottsym{)}  \rightarrow  \ottsym{\{}   z \mathord: \ottkw{int}   \mid   x  \ottsym{=}  z  \times  y   \ottsym{\}}\).  (Here, for simplicity, we
ignore the case where devision involves a remainder, though it can be
taken account into by writing a more sophisticated predicate.)

A manifest contract system checks a contract dynamically to achieve its
goal---as many \emph{correct} programs as possible can be compiled and run;
while some studies~\cite{RondonKJ2008, UnnoK2009, KobayashiSU2011, Terauchi2010,
  ZhuJ2013, VazouSJVP2014}, which also use a refinement type system, check
contract satisfaction statically but with false positives and/or restriction on predicates.
The checks are done in the form of explicit casts of the form
\(\ottsym{(}  M  \ottsym{:}  \sigma  \Rightarrow  \tau  \ottsym{)}\); where \(M\) is a subject, \(\sigma\) is a source type
(namely the type of \(M\)), and \(\tau\) is a target type.\footnote{Many
  manifest contract systems put a unique label on each cast to distinguish which
  cast fails, but we omit them for simplicity.}
A cast checks whether the value of \(M\) can
have the type \(\tau\).  If the check fails, the cast throws an uncatchable
exception called \emph{blame}, which stands for contract violation.  So, the
system does not guarantee the absence of contract violations statically, but it
guarantees that the result of successful execution satisfies the predicate of a
refinement type in the program's type.  This property follows subject reduction
and a property called \emph{value inversion}~\cite{SekiyamaNI2015}---\textit{if
  a value \(V\) has a type \(\ottsym{\{}   x \mathord: \tau   \mid  M  \ottsym{\}}\), then the expression obtained by
  substituting \(V\) for \(x\) in \(M\) is always evaluated into
  \(\ottkw{true}\)}.

\subsection{Motivation}
The motivation of the integration of intersection types is to enrich
the expressiveness of contracts by types.  It naturally arises when we
consider a contract stated in a conjunctive form~\cite{plt-tr2,
  KeilT2015, CastagnaL2017}.  Considering parities (even/odd)
of integers, for example, we can state a contract of the addition as a conjunctive form;
that is
\begin{quote}
  ``An even integer is returned if both given arguments are even integers;
  \textbf{and} an odd integer is returned if the first given argument is even
  integer and the second given argument is odd integer; \textbf{and} ...''
\end{quote}
Using intersection types, we can write the contract as the following
type.\footnote{\(\ottkw{even} \eqdef \ottsym{\{}   x \mathord: \ottkw{nat}   \mid  x \, \ottkw{mod} \, 2  \ottsym{=}  0  \ottsym{\}} \quad \ottkw{odd} \eqdef
  \ottsym{\{}   x \mathord: \ottkw{nat}   \mid  x \, \ottkw{mod} \, 2  \ottsym{=}  1  \ottsym{\}}\)}
\begin{multline*}
  \ottsym{(}  \ottkw{even}  \rightarrow  \ottkw{even}  \rightarrow  \ottkw{even}  \ottsym{)}  \wedge  \ottsym{(}  \ottkw{even}  \rightarrow  \ottkw{odd}  \rightarrow  \ottkw{odd}  \ottsym{)} \\
  \wedge \ottsym{(}  \ottkw{odd}  \rightarrow  \ottkw{even}  \rightarrow  \ottkw{odd}  \ottsym{)}  \wedge  \ottsym{(}  \ottkw{odd}  \rightarrow  \ottkw{odd}  \rightarrow  \ottkw{even}  \ottsym{)}
\end{multline*}
In fact, a semantically equivalent contract could be expressed by using
dependent function types found in existing
systems as follows, where \(\ottkw{evenp} \eqdef \lambda   x \mathord: \ottkw{nat}   \ottsym{.}  x \, \ottkw{mod} \, 2  \ottsym{=}  0\) and
\(\ottkw{oddp} \eqdef \lambda   x \mathord: \ottkw{nat}   \ottsym{.}  x \, \ottkw{mod} \, 2  \ottsym{=}  1\).
\begin{alignat*}{1}
  ( x \mathord: \ottkw{nat} ) \rightarrow ( y \mathord: \ottkw{nat} ) \rightarrow  \{  z \mathord: \ottkw{nat}  \mid &\ottkw{if} \, \ottkw{evenp} \, x \\
  & \ottkw{then} \, \ottsym{(}  \ottkw{if} \, \ottkw{evenp} \, y \, \ottkw{then} \, \ottkw{evenp} \, z \, \ottkw{else} \, \ottkw{oddp} \, z  \ottsym{)}\\
  & \ottkw{else} \, \ottsym{(}  \ottkw{if} \, \ottkw{evenp} \, y \, \ottkw{then} \, \ottkw{oddp} \, z \, \ottkw{else} \, \ottkw{evenp} \, z  \ottsym{)}\}
\end{alignat*}
Thus, one might think it is just a matter of taste in how contracts are represented.
However, intersection types are more expressive, that is, there are contracts that
are hard to express in existing manifest contract systems.  Consider the
following (a bit contrived) contract for a higher-order function.
\begin{gather*}
  \ottsym{(}  \ottsym{(}  \ottkw{int}  \rightarrow  \ottsym{\{}   x \mathord: \ottkw{int}   \mid   x  \neq  0   \ottsym{\}}  \ottsym{)}  \rightarrow  \ottsym{\{}   z \mathord: \ottkw{int}   \mid  z  \ottsym{=}  1  \ottsym{\}}  \ottsym{)}  \wedge  \ottsym{(}  \ottsym{(}  \ottkw{int}  \rightarrow  \ottkw{int}  \ottsym{)}  \rightarrow  \ottsym{\{}   z \mathord: \ottkw{int}   \mid  z  \ottsym{=}  0  \ottsym{\}}  \ottsym{)}
\end{gather*}
The result type depends on input as the parity contract does.  This
time, however, it cannot be written with a dependent function type; there
is no obvious way to write a predicate corresponding to \(\ottkw{evenp}\) (or
\(\ottkw{oddp}\)).  Such a predicate must check that a given function returns
non-zero for all integers, but this is simply not computable.

\subsection{Our Work}
We develop a formal calculus \PCFvD{}, a manifest contract system with
intersection types.  The goal of this paper is to prove its desirable
properties: preservation, progress, value inversion; and one that guarantees
that the existence of dynamic checking does not change the ``essence'' of
computation.

There are several tasks in constructing a manifest contract system, but a
specific challenge for \PCFvD{} arises from the fact---manifest contract systems
are intended as an intermediate language for \emph{hybrid type
  checking}~\cite{Flanagan2006}.  Firstly, consider the following definition
with a parity contract in a surface language.
\begin{gather*}
  \ottkw{let} \,  succ' \mathord: \ottkw{odd}  \rightarrow  \ottkw{even}   \ottsym{=}  \lambda  x  \ottsym{.}   \ottkw{succ} ( x ) .
\end{gather*}
Supposing the primitive operator \( \ottkw{succ} ( x ) \) has the type
\(\ottkw{nat}  \rightarrow  \ottkw{nat}\), we need to check subtyping relation \(\ottkw{odd}  \ottsym{<:}  \ottkw{nat}\) and
\(\ottkw{nat}  \ottsym{<:}  \ottkw{even}\) to check well-typedness of the definition.  As we have
mentioned, however, this kind of subtyping checking is undecidable in general.
So, (when the checking is impossible) we insert casts to check the contract at
run-time and obtain the following compiled definition.
\begin{gather*}
  \ottkw{let} \,  succ' \mathord: \ottkw{odd}  \rightarrow  \ottkw{even}   \ottsym{=}  \lambda   x \mathord: \ottkw{odd}   \ottsym{.}  \ottsym{(}   \ottkw{succ} ( \ottsym{(}  x  \ottsym{:}  \ottkw{odd}  \Rightarrow  \ottkw{nat}  \ottsym{)} )   \ottsym{:}  \ottkw{nat}  \Rightarrow  \ottkw{even}  \ottsym{)}.
\end{gather*}
A problem arises when we consider the following definition equipped with a more
complicated parity contract.
\begin{gather*}
  \ottkw{let} \,  succ' \mathord: \ottsym{(}  \ottkw{odd}  \rightarrow  \ottkw{even}  \ottsym{)}  \wedge  \ottsym{(}  \ottkw{even}  \rightarrow  \ottkw{odd}  \ottsym{)}   \ottsym{=}  \lambda  x  \ottsym{.}   \ottkw{succ} ( x ) .
\end{gather*}
%
The problem is that we need to insert different casts into code according to how
the code is typed; and one piece of code might be typed in several essentially
different ways in an intersection type system since it is a polymorphic type
system.  For instance, in the example above,
\(\lambda   x \mathord: \ottkw{odd}   \ottsym{.}  \ottsym{(}   \ottkw{succ} ( \ottsym{(}  x  \ottsym{:}  \ottkw{odd}  \Rightarrow  \ottkw{nat}  \ottsym{)} )   \ottsym{:}  \ottkw{nat}  \Rightarrow  \ottkw{even}  \ottsym{)}\) is obtained by cast insertion if
the function is typed as \(\ottkw{odd}  \rightarrow  \ottkw{even}\); while
\(\lambda   x \mathord: \ottkw{even}   \ottsym{.}  \ottsym{(}   \ottkw{succ} ( \ottsym{(}  x  \ottsym{:}  \ottkw{even}  \Rightarrow  \ottkw{nat}  \ottsym{)} )   \ottsym{:}  \ottkw{nat}  \Rightarrow  \ottkw{odd}  \ottsym{)}\) is obtained when the body is
typed as \(\ottkw{even}  \rightarrow  \ottkw{odd}\).  However, the function must have both types to have
the intersection type.  It may seem sufficient to just cast the body itself,
that is, \(\ottsym{(}  \ottsym{(}  \lambda   x \mathord: \ottkw{nat}   \ottsym{.}   \ottkw{succ} ( x )   \ottsym{)}  \ottsym{:}  \ottkw{nat}  \rightarrow  \ottkw{nat}  \Rightarrow  \ottsym{(}  \ottkw{odd}  \rightarrow  \ottkw{even}  \ottsym{)}  \wedge  \ottsym{(}  \ottkw{even}  \rightarrow  \ottkw{odd}  \ottsym{)}  \ottsym{)}\).
However, this just shelves the problem: Intuitively, to check if the subject has
the target intersection type, we need to check if the subject has both types in
the conjunction.  This brings us back to the same original question.

\subsubsection{Contributions.}
Our contributions are summarized as follows:
\begin{itemize}
\item we design a manifest contracts calculus with \emph{refinement intersection
    types}~\cite{Terauchi2010, ZhuJ2013}, a restricted form of intersection
  types.
\item we formalize the calculus \PCFvD{}; and 
\item we state and prove type soundness, value inversion, and dynamic soundness.
\end{itemize}
The whole system including proofs is mechanized with Coq.\footnote{%
  The Coq scripts are available through the following URL:
  \url{https://www.fos.kuis.kyoto-u.ac.jp/~igarashi/papers/manifest-intersection.html}.}
We use locally nameless representation and cofinite
quantification~\cite{Chargueraud2012} for the mechanization.

\subsubsection{Disclaimer.}
To concentrate on the \PCFvD{}-specific problems, we put the following
restrictions for \PCFvD{} in this paper compared to a system one would imagine
from the phrase ``a manifest contract system with intersection types''.
\begin{itemize}
\item \PCFvD{} does not support dependent function types.  As we will see,
  \PCFvD{} uses nondeterminism for dynamic checking.  The combination of
  dependent function types and nondeterminism poses a considerable
  challenge~\cite{NishidaI2018}.
\item We use \emph{refinement intersection types} rather than general ones.  Roughly
  speaking, \(\sigma  \wedge  \tau\) is a refinement intersection type if both \(\sigma\)
  and \(\tau\) refine the same type.  So, for example,
  \(\ottsym{(}  \ottkw{even}  \rightarrow  \ottkw{even}  \ottsym{)}  \wedge  \ottsym{(}  \ottkw{odd}  \rightarrow  \ottkw{odd}  \ottsym{)}\) is a refinement intersection types since
  types of both sides refine the same type \(\ottkw{nat}  \rightarrow  \ottkw{nat}\), while
  \(\ottsym{(}  \ottkw{nat}  \rightarrow  \ottkw{nat}  \ottsym{)}  \wedge  \ottsym{(}  \ottkw{float}  \rightarrow  \ottkw{float}  \ottsym{)}\) is not.
\end{itemize}


\section{Overview of Our Language: \PCFvD{}}\label{sec:language}
Our language \PCFvD{} is a call-by-value dialect of
PCF~\cite{Plotkin1977}, extended with intersection types (derived from
the \(\Delta\)-calculus~\cite{LiquoriS2018}) and manifest contracts
(derived from
\(\lambda_{\textrm{H}}\)~\cite{Flanagan2006,GreenbergPW2010}).  So,
the baseline is that any \emph{valid} PCF program is also a valid
\PCFvD{} program; and a \PCFvD{} program should behave as the same way
as (call-by-value) PCF.  In other words, \PCFvD{} is a conservative
extension of call-by-value PCF.

\subsection{The \(\Delta\)-calculus}

To address the challenge discussed in \secref{introduction}, \PCFvD{} is strongly
influenced by the \emph{\(\Delta\)-calculus} by Liquori and Stolze~\cite{LiquoriS2018}, an intersection type
system \`a la Church.  Their novel idea is a new form called \emph{strong
  pair}, written \(\langle  M  \ottsym{,}  N  \rangle\).  It is a kind of pair and used as a constructor
for expressions of intersection types.  So, using the strong pair, for example,
we can write an identity function having type
\(\ottsym{(}  \ottkw{even}  \rightarrow  \ottkw{even}  \ottsym{)}  \wedge  \ottsym{(}  \ottkw{odd}  \rightarrow  \ottkw{odd}  \ottsym{)}\) as follows.
\begin{gather*}
  \langle  \lambda   x \mathord: \ottkw{even}   \ottsym{.}  x  \ottsym{,}  \lambda   x \mathord: \ottkw{odd}   \ottsym{.}  x  \rangle
\end{gather*}
Unlike product types, however, $M$ and $N$ in a strong pair cannot be arbitrarily chosen.  A strong pair requires
%
%
that the \emph{essence} of both expressions in a pair be the same.  An
essence \( \mathopen\wr M \mathclose\wr \) of a typed expression $M$ is the untyped skeleton
of $M$.  For instance, \(  \mathopen\wr \lambda   x \mathord: \tau   \ottsym{.}  x \mathclose\wr  \ottsym{=} \lambda  x  \ottsym{.}  x \).  So, the
requirement justifies strong pairs as the introduction of intersection
types: that is, computation represented by the two expressions is the same
and so the system still follows a Curry-style intersection type
system.  Strong pairs just give a way to annotate expressions with a
different type in a different context.

We adapt their idea into \PCFvD{} by letting an essence represent the
\emph{contract-irrelevant part} of an expression, rather than an untyped
skeleton.  For instance, the essence of
\(\lambda   x \mathord: \ottkw{odd}   \ottsym{.}  \ottsym{(}   \ottkw{succ} ( \ottsym{(}  x  \ottsym{:}  \ottkw{odd}  \Rightarrow  \ottkw{nat}  \ottsym{)} )   \ottsym{:}  \ottkw{nat}  \Rightarrow  \ottkw{even}  \ottsym{)}\) is \(\lambda   x \mathord: \ottkw{nat}   \ottsym{.}   \ottkw{succ} ( x ) \) (the
erased contract-relevant parts are casts and predicates of refinement types).
Now, we can (ideally automatically) compile the \( succ' \) definition in
\secref{introduction} into the following \PCFvD{} expression.
\begin{multline*}
  \ottkw{let} \,  succ' \mathord: \ottsym{(}  \ottkw{odd}  \rightarrow  \ottkw{even}  \ottsym{)}  \wedge  \ottsym{(}  \ottkw{even}  \rightarrow  \ottkw{odd}  \ottsym{)}   \ottsym{=}   \\ 
  \langle\lambda   x \mathord: \ottkw{odd}   \ottsym{.}  \ottsym{(}   \ottkw{succ} ( \ottsym{(}  x  \ottsym{:}  \ottkw{odd}  \Rightarrow  \ottkw{nat}  \ottsym{)} )   \ottsym{:}  \ottkw{nat}  \Rightarrow  \ottkw{even}  \ottsym{)},\\
  \lambda   x \mathord: \ottkw{even}   \ottsym{.}  \ottsym{(}   \ottkw{succ} ( \ottsym{(}  x  \ottsym{:}  \ottkw{even}  \Rightarrow  \ottkw{nat}  \ottsym{)} )   \ottsym{:}  \ottkw{nat}  \Rightarrow  \ottkw{odd}  \ottsym{)}\rangle
\end{multline*}
This strong pair satisfies the condition, that is, both expressions
have the same essence.

\subsection{Cast Semantics for Intersection Types}
Having introduced intersection types, we have to extend the semantics
of casts so that they handle contracts written with intersection types.
Following Keil and Thiemann~\cite{KeilT2015}, who studied intersection
(and union) contract checking in the ``latent'' style~\cite{GreenbergPW2010}
for an untyped language, we give the semantics of 
a cast \emph{to} an intersection type by the following rule:
\begin{gather*}
  \ottsym{(}  V  \ottsym{:}  \sigma  \Rightarrow  \tau_{{\mathrm{1}}}  \wedge  \tau_{{\mathrm{2}}}  \ottsym{)}  \longrightarrow  \langle  \ottsym{(}  V  \ottsym{:}  \sigma  \Rightarrow  \tau_{{\mathrm{1}}}  \ottsym{)}  \ottsym{,}  \ottsym{(}  V  \ottsym{:}  \sigma  \Rightarrow  \tau_{{\mathrm{2}}}  \ottsym{)}  \rangle
\end{gather*}
The reduction rule should not be surprising: \(V\) has to have both
\(\tau_{{\mathrm{1}}}\) and \(\tau_{{\mathrm{2}}}\) and a strong pair introduces an intersection type
$\tau_{{\mathrm{1}}}  \wedge  \tau_{{\mathrm{2}}}$ from $\tau_{{\mathrm{1}}}$ and $\tau_{{\mathrm{2}}}$.  For the original cast to succeed, both
of the split casts have to succeed.


A basic strategy of a cast \emph{from} an intersection type is expressed by
the following two rules.
\begin{gather*}
  \ottsym{(}  V  \ottsym{:}  \sigma_{{\mathrm{1}}}  \wedge  \sigma_{{\mathrm{2}}}  \Rightarrow  \tau  \ottsym{)}  \longrightarrow  \ottsym{(}   \pi_\mathrm{1} ( V )   \ottsym{:}  \sigma_{{\mathrm{1}}}  \Rightarrow  \tau  \ottsym{)}\\
  \ottsym{(}  V  \ottsym{:}  \sigma_{{\mathrm{1}}}  \wedge  \sigma_{{\mathrm{2}}}  \Rightarrow  \tau  \ottsym{)}  \longrightarrow  \ottsym{(}   \pi_\mathrm{2} ( V )   \ottsym{:}  \sigma_{{\mathrm{2}}}  \Rightarrow  \tau  \ottsym{)}
\end{gather*}
The cast tests whether a nondeterministically chosen element in a
(possibly nested) strong pair can be cast to $\tau$.

One problem, however, arises when a function type is involved.
Consider the following expression.
\begin{gather*}
    \ottsym{(}   \lambda   f \mathord: \ottkw{nat}  \rightarrow  \ottkw{nat}   \ottsym{.}  f \, 0  +  f  \, 1  \ottsym{)} \, M_{\text{cast}}
  \end{gather*}
  where
  \begin{gather*}
  M_{\text{cast}} \eqdef \ottsym{(}  V  \ottsym{:}  \ottsym{(}  \ottkw{even}  \rightarrow  \ottkw{nat}  \ottsym{)}  \wedge  \ottsym{(}  \ottkw{odd}  \rightarrow  \ottkw{nat}  \ottsym{)}  \Rightarrow  \ottkw{nat}  \rightarrow  \ottkw{nat}  \ottsym{)}.
\end{gather*}
\(V\) can be used as both \(\ottkw{even}  \rightarrow  \ottkw{nat}\) and \(\ottkw{odd}  \rightarrow  \ottkw{nat}\).
This means \(V\) can handle arbitrary natural numbers.  Thus, this
cast should be valid and evaluation of the expression above should not fail.
However, with the reduction rules presented above, evaluation results
in blame in both branches: the choice is made before calling
$\lambda f:\ottkw{nat}  \rightarrow  \ottkw{nat}. \cdots$, the function being assigned into
\(f\) only can handle either \(\ottkw{even}\) or \(\ottkw{odd}\), leading
to failure at either \(f \, 1\) or \(f \, 0\), respectively.

To solve the problem, we delay a cast into a function type even when the source type is an intersection type.  In fact, \(M_{\text{cast}}\) reduces to
a wrapped value $V_{\text{cast}}$ below
\begin{gather*}
  V_{\text{cast}} \eqdef \lAngle  V  \ottsym{:}  \ottsym{(}  \ottkw{even}  \rightarrow  \ottkw{nat}  \ottsym{)}  \wedge  \ottsym{(}  \ottkw{odd}  \rightarrow  \ottkw{nat}  \ottsym{)}  \Rightarrow  \ottkw{nat}  \rightarrow  \ottkw{nat}  \rAngle,
\end{gather*}
similarly to higher-order casts~\cite{FindlerF2002}.
Then, the delayed cast fires when an actual argument is given:
\begin{alignat*}{2}
  &&&\ottsym{(}   \lambda   f \mathord: \ottkw{nat}  \rightarrow  \ottkw{nat}   \ottsym{.}  f \, 0  +  f  \, 1  \ottsym{)} \, M_{\text{cast}}\\
  & \longrightarrow && \ottsym{(}   \lambda   f \mathord: \ottkw{nat}  \rightarrow  \ottkw{nat}   \ottsym{.}  f \, 0  +  f  \, 1  \ottsym{)} \, V_{\text{cast}}\\
  & \longrightarrow && V_{\text{cast}} \, 0 + V_{\text{cast}} \, 1\\
  & \longrightarrow^\ast \;&& \ottsym{(}  V  \ottsym{:}  \ottkw{even}  \rightarrow  \ottkw{nat}  \Rightarrow  \ottkw{nat}  \rightarrow  \ottkw{nat}  \ottsym{)} \, 0 + \ottsym{(}  V  \ottsym{:}  \ottkw{odd}  \rightarrow  \ottkw{nat}  \Rightarrow  \ottkw{nat}  \rightarrow  \ottkw{nat}  \ottsym{)}
  \, 1\\
  & \longrightarrow^\ast && 1
\end{alignat*}



\section{Formal Systems}\label{sec:system}
In this section, we formally define two languages PCFv and \PCFvD{}, an
extension of PCFv as sketched in the last section.  PCFv is a call-by-value PCF.
We only give operational semantics and omit its type system and a type soundness
proof, because we are only interested in how its behavior is related to
\PCFvD{}, the main language of this paper.

\subsection{PCFv}

\begin{figure}[t]
  \centering
  \begin{tabularx}{\linewidth}{@{}>{$}r<{\;$}@{}>{$}X<{$}@{}}
  \sigma, \tau \colcoleq & \ottkw{nat} \mid \ottkw{bool} \mid \sigma  \rightarrow  \tau
  \\
  L, M, N \colcoleq & \ottkw{O} \mid  \ottkw{succ} ( M )  \mid  \ottkw{pred} ( M )  \mid
   \ottkw{iszero} ( M )  \mid \ottkw{true} \mid \ottkw{false} \mid \ottkw{if} \, L \, \ottkw{then} \, M \, \ottkw{else} \, N \mid
  x \mid M \, N \mid \lambda   x \mathord: \tau   \ottsym{.}  M \mid  \mu  f \mathord: \sigma_{{\mathrm{1}}}  \rightarrow  \sigma_{{\mathrm{2}}}  . \lambda   x \mathord: \tau   \ottsym{.}  M 
  \\
  \overline{n} \colcoleq & \ottkw{O} \mid  \ottkw{succ} ( \overline{n} ) 
  \\
  V \colcoleq & \overline{n} \mid \ottkw{true} \mid \ottkw{false} \mid \lambda   x \mathord: \tau   \ottsym{.}  M
  \\
  \mscrE \colcoleq &  \ottkw{succ} (  \Box  )  \mid  \ottkw{pred} (  \Box  )  \mid  \ottkw{iszero} (  \Box  )  \mid
  \ottkw{if} \,  \Box  \, \ottkw{then} \, M \, \ottkw{else} \, N \mid  \Box  \, M \mid V \,  \Box 
\end{tabularx}


  \caption{Syntax of PCFv.}
  \label{fig:pcfv-syntax}
\end{figure}

The syntax of PCFv is shown in \figref{pcfv-syntax}.  Metavariables \(x\),
\(y\), \(z\), \(f\), and \(g\) range over term variables
(\(f\) and \(g\) are intended for ones bound to functions); \(\sigma\)
and \(\tau\) range over types; \(L\), \(M\), and \(N\) range over
expressions; \(V\) ranges over values; and \(\mscrE\) ranges over evaluation
frames.  The definition is fairly standard, except for one point:
instead of introducing a constant for the general \texttt{fix}-point operator,
we introduce a form $ \mu  f \mathord: \sigma_{{\mathrm{1}}}  \rightarrow  \sigma_{{\mathrm{2}}}  . \lambda   x \mathord: \tau   \ottsym{.}  M $ for recursive functions.

\begin{definition}[Bound and free variables]
  An occurrence of \(x\) in \(M\) of \(\lambda   x \mathord: \tau   \ottsym{.}  M\) and \(f\) in
  \(M\) of \( \mu  f \mathord: \sigma_{{\mathrm{1}}}  \rightarrow  \sigma_{{\mathrm{2}}}  . \lambda   x \mathord: \tau   \ottsym{.}  M \) is called \emph{bound}.
  The set of \emph{free variables} in \(M\) is the
  variables of which there are free occurrence in \(M\).  We denote the free
  variables by \( \ottkw{fv} ( M ) \).
\end{definition}

\begin{convention}
  We define \(\alpha\)-equivalence in a standard manner and identify \(\alpha\)-equivalent expressions.
\end{convention}

\begin{definition}[Substitution]
  Substitution of \(N\) for a free variable \(x\) in
  \(M\), written \( M  [  x  \mapsto  N  ] \), is defined 
  in a standard capture-avoiding manner.
\end{definition}

\begin{definition}[Context application]
  Given an evaluation frame \(\mscrE\) and an expression \(M\), \( \mscrE  [  M  ] \)
  denotes the expression obtained by just replacing the hole \( \Box \) in
  \(\mscrE\) with \(M\).
\end{definition}

\begin{figure}[t]
  \centering
  \input{figures/pcfv-semantics}
  \caption{Operational semantics of PCFv.}
  \label{fig:pcfv-semantics}
\end{figure}

A small-step operational semantics of PCFv is inductively defined by the rules
in \figref{pcfv-semantics}.  Those rules consist of standard (call-by-value) PCF
axiom schemes and one rule scheme \ruleref{PCF-Ctx}, which expresses the call-by-value evaluation strategy using the evaluation frames.

\subsection{\PCFvD{}}

\PCFvD{} is an extension of PCFv.  Through abuse of syntax, we use the
metavariables of PCFv for \PCFvD{}, though we are dealing with the two different
languages.

\begin{figure}[t]
  \centering
  \begin{tabularx}{\linewidth}{@{}>{$}r<{\;$}@{}>{$}X<{$}@{}}
  \sigma, \tau \colcoleq & \ottkw{nat} \mid \ottkw{bool} \mid \sigma  \rightarrow  \tau \mid
  \hatched{\sigma  \wedge  \tau} \mid \hatched{\ottsym{\{}   x \mathord: \tau   \mid  M  \ottsym{\}}}
  \\
  I \colcoleq & \sigma  \rightarrow  \tau \mid I_{{\mathrm{1}}}  \wedge  I_{{\mathrm{2}}}
  \\
  L, M, N \colcoleq & \ottkw{O} \mid  \ottkw{succ} ( M )  \mid  \ottkw{pred} ( M )  \mid
   \ottkw{iszero} ( M )  \mid \ottkw{true} \mid \ottkw{false} \mid \ottkw{if} \, L \, \ottkw{then} \, M \, \ottkw{else} \, N \mid
  x \mid M \, N \mid \lambda   x \mathord: \tau   \ottsym{.}  M \mid \hatched{ \mu  f \mathord: I  . B } \mid
  \hatched{\langle  M  \ottsym{,}  N  \rangle} \mid \hatched{ \pi_\mathrm{1} ( M ) } \mid \hatched{ \pi_\mathrm{2} ( M ) } \mid
  \hatched{\ottsym{(}  M  \ottsym{:}  \sigma  \Rightarrow  \tau  \ottsym{)}} \mid \hatched{\lAngle  V  \ottsym{:}  \sigma  \Rightarrow  \tau_{{\mathrm{1}}}  \rightarrow  \tau_{{\mathrm{2}}}  \rAngle} \mid
  \hatched{\lAngle  M  \mathrel?  \ottsym{\{}   x \mathord: \tau   \mid  N  \ottsym{\}}  \rAngle} \mid \hatched{\lAngle  M  \Longrightarrow  V  \ottsym{:}  \ottsym{\{}   x \mathord: \tau   \mid  N  \ottsym{\}}  \rAngle}
  \\
  B \colcoleq & \lambda   x \mathord: \tau   \ottsym{.}  M \mid \langle  B_{{\mathrm{1}}}  \ottsym{,}  B_{{\mathrm{2}}}  \rangle
  \\
  \overline{n} \colcoleq & \ottkw{O} \mid  \ottkw{succ} ( \overline{n} ) 
  \\
  V \colcoleq & \overline{n} \mid \ottkw{true} \mid \ottkw{false} \mid \lambda   x \mathord: \tau   \ottsym{.}  M \mid
  \hatched{\langle  V_{{\mathrm{1}}}  \ottsym{,}  V_{{\mathrm{2}}}  \rangle} \mid \hatched{\lAngle  V  \ottsym{:}  \sigma  \Rightarrow  \tau_{{\mathrm{1}}}  \rightarrow  \tau_{{\mathrm{2}}}  \rAngle}
  \\
  \ottnt{C} \colcoleq & M \mid \ottkw{blame}
  \\
  \mscrE \colcoleq &  \ottkw{succ} (  \Box  )  \mid  \ottkw{pred} (  \Box  )  \mid  \ottkw{iszero} (  \Box  )  \mid
  \ottkw{if} \,  \Box  \, \ottkw{then} \, M \, \ottkw{else} \, N \mid  \Box  \, M \mid V \,  \Box  \mid
  \hatched{\pi_\mathrm{1} \, \ottsym{(}   \Box   \ottsym{)}} \mid \hatched{\pi_\mathrm{2} \, \ottsym{(}   \Box   \ottsym{)}} \mid \hatched{\ottsym{(}   \Box   \ottsym{:}  \sigma  \Rightarrow  \tau  \ottsym{)}} \mid \hatched{\lAngle   \Box   \mathrel?  \ottsym{\{}   x \mathord: \tau   \mid  M  \ottsym{\}}  \rAngle}
  \\
  \Gamma \colcoleq &  \emptyset  \mid \Gamma  \ottsym{,}   x \mathord: \tau 
\end{tabularx}


  \caption{Syntax of \PCFvD{}.}
  \label{fig:pcfvd-syntax}
\end{figure}

The syntax of \PCFvD{} is shown in \figref{pcfvd-syntax}.  We
introduce some more metavariables: \(I\) ranges over
\emph{interface types}, a subset of types; \(B\) ranges over
\emph{recursion bodies}, a subset of expressions; \(\ottnt{C}\) ranges
over \emph{commands}; and \(\Gamma\) ranges over typing contexts.
Shaded parts show differences (extensions and modifications) from
PCFv.
Types are extended with intersection types and refinement types; the
restriction that a well-formed intersection type is a refinement
intersection type is enforced by the type system.
The variable \(x\) in \(N\) of \(\ottsym{\{}   x \mathord: \tau   \mid  N  \ottsym{\}}\) is bound.
An interface type,
which is a single function type or (possibly nested) intersection over function types,
is used for the type annotation for a recursive function.
Expressions are extended with ones for: strong pairs (namely, pair
construction, left projection, and right projection); casts; and
run-time expressions of the form \(\lAngle\dots\rAngle\) that can
occur at run time for dynamic checking and not in source code.
Recursion bodies are (possibly nested strong pairs) of
\(\lambda\)-abstractions.

Run-time expressions deserve detailed explanation.  A \emph{delayed check}
\(\lAngle  V  \ottsym{:}  \sigma  \Rightarrow  \tau_{{\mathrm{1}}}  \rightarrow  \tau_{{\mathrm{2}}}  \rAngle\) denotes a delayed cast into a function type, which
is used in cases such as those discussed in \secref{introduction} for
instance. A \emph{waiting check} \(\lAngle  M  \mathrel?  \ottsym{\{}   x \mathord: \tau   \mid  N  \ottsym{\}}  \rAngle\) denotes a state waiting
for the check \(M\) against \(N\) until \(M\) is evaluated into a
value.  An \emph{active check} \(\lAngle  M  \Longrightarrow  V  \ottsym{:}  \ottsym{\{}   x \mathord: \tau   \mid  N  \ottsym{\}}  \rAngle\) is a state running
test $M$ to see if \(V\) satisfies \(N\).
The variable $x$ in $N$ of \(\lAngle  M  \mathrel?  \ottsym{\{}   x \mathord: \tau   \mid  N  \ottsym{\}}  \rAngle\)
  and \(\lAngle  M  \Longrightarrow  V  \ottsym{:}  \ottsym{\{}   x \mathord: \tau   \mid  N  \ottsym{\}}  \rAngle\) is bound.

We do not include \(\ottkw{blame}\) in expressions, although existing manifest
contract systems usually include it among expressions.  As a consequence, the
evaluation relation for \PCFvD{} is defined between commands.  This distinction
will turn out to be convenient in stating correspondence between the semantics
of \PCFvD{} and that of PCFv, which does not have \(\ottkw{blame}\).

\begin{convention}
  We assume the index variable \(i\) ranges over \(\{1,2\}\) to save space.
\end{convention}

\begin{definition}[Terms]
  We call the union of the sets of types and expressions as \emph{terms}.
\end{definition}

\begin{notation}
  \(M \preceq N\) denotes that \textit{\(M\) is a sub-expression of \(N\)}.
\end{notation}

\begin{convention}
  We define \(\alpha\)-equivalence in a standard manner and identify \(\alpha\)-equivalent terms.
\end{convention}

\begin{convention}
  We often omit the empty environment.  We abuse a comma for the concatenation
  of environments like \(\Gamma_{{\mathrm{1}}}  \ottsym{,}  \Gamma_{{\mathrm{2}}}\).  We denote a singleton environment, an
  environment that contains only one variable binding, by \( x \mathord: \tau \).
\end{convention}

\begin{definition}[Free variables and substitution]
  Free variables and substitution are defined similarly to PCFv; and
  we use the same notations.  Note that since the types and expressions
  of \PCFvD{} are mutually recursively defined, the metaoperations are
  inductively defined for terms.
\end{definition}

\begin{definition}[Domain of typing context]
  The \emph{domain} of \(\Gamma\), written \( \ottkw{dom} ( \Gamma ) \), is defined
  by: \(  \ottkw{dom} (  \emptyset  )  \ottsym{=}  \emptyset  \) and
  \(  \ottkw{dom} ( \Gamma  \ottsym{,}   x \mathord: \tau  )  \ottsym{=}  \ottkw{dom} ( \Gamma )   \cup \{x\}\).  We abbreviate \( x  \nin   \ottkw{dom} ( \Gamma )  \) to
  \(x  \mathrel{\#}  \Gamma\).
\end{definition}

\begin{figure}[t]
  \centering
  \begin{tabular}{>{$}c<{$}>{$}c<{$}}
      \mathopen\wr \ottkw{nat} \mathclose\wr  \ottsym{=} \ottkw{nat}  &
      \mathopen\wr \ottkw{if} \, L \, \ottkw{then} \, M \, \ottkw{else} \, N \mathclose\wr  \ottsym{=} \ottkw{if} \,  \mathopen\wr L \mathclose\wr  \, \ottkw{then} \,  \mathopen\wr M \mathclose\wr  \, \ottkw{else} \,  \mathopen\wr N \mathclose\wr  \\
      \mathopen\wr \ottkw{bool} \mathclose\wr  \ottsym{=} \ottkw{bool}  &
      \mathopen\wr x \mathclose\wr  \ottsym{=} x \\
      \mathopen\wr \sigma  \rightarrow  \tau \mathclose\wr  \ottsym{=}  \mathopen\wr \sigma \mathclose\wr   \rightarrow   \mathopen\wr \tau \mathclose\wr   &
      \mathopen\wr M \, N \mathclose\wr  \ottsym{=}  \mathopen\wr M \mathclose\wr  \,  \mathopen\wr N \mathclose\wr  \\
      \mathopen\wr \sigma  \wedge  \tau \mathclose\wr  \ottsym{=}  \mathopen\wr \sigma \mathclose\wr   &
      \mathopen\wr \lambda   x \mathord: \tau   \ottsym{.}  M \mathclose\wr  \ottsym{=} \lambda   x \mathord:  \mathopen\wr \tau \mathclose\wr    \ottsym{.}   \mathopen\wr M \mathclose\wr  \\
      \mathopen\wr \ottsym{\{}   x \mathord: \tau   \mid  M  \ottsym{\}} \mathclose\wr  \ottsym{=}  \mathopen\wr \tau \mathclose\wr   &
      \mathopen\wr \langle  M  \ottsym{,}  N  \rangle \mathclose\wr  \ottsym{=}  \mathopen\wr M \mathclose\wr  \\
      \mathopen\wr \ottkw{O} \mathclose\wr  \ottsym{=} \ottkw{O}  &
      \mathopen\wr  \pi_i( M )  \mathclose\wr  \ottsym{=}  \mathopen\wr M \mathclose\wr  \\
      \mathopen\wr  \ottkw{succ} ( M )  \mathclose\wr  \ottsym{=}  \ottkw{succ} (  \mathopen\wr M \mathclose\wr  )   &
      \mathopen\wr  \mu  f \mathord: I  . B  \mathclose\wr  \ottsym{=}  \mu  f \mathord:  \mathopen\wr I \mathclose\wr   .  \mathopen\wr B \mathclose\wr   \\
      \mathopen\wr  \ottkw{pred} ( M )  \mathclose\wr  \ottsym{=}  \ottkw{pred} (  \mathopen\wr M \mathclose\wr  )   &
      \mathopen\wr \ottsym{(}  M  \ottsym{:}  \sigma  \Rightarrow  \tau  \ottsym{)} \mathclose\wr  \ottsym{=}  \mathopen\wr M \mathclose\wr  \\
      \mathopen\wr  \ottkw{iszero} ( M )  \mathclose\wr  \ottsym{=}  \ottkw{iszero} (  \mathopen\wr M \mathclose\wr  )   &
      \mathopen\wr \lAngle  V  \ottsym{:}  \sigma  \Rightarrow  \tau_{{\mathrm{1}}}  \rightarrow  \tau_{{\mathrm{2}}}  \rAngle \mathclose\wr  \ottsym{=}  \mathopen\wr V \mathclose\wr  \\
      \mathopen\wr \ottkw{true} \mathclose\wr  \ottsym{=} \ottkw{true}  &
      \mathopen\wr \lAngle  M  \mathrel?  \ottsym{\{}   x \mathord: \tau   \mid  N  \ottsym{\}}  \rAngle \mathclose\wr  \ottsym{=}  \mathopen\wr M \mathclose\wr  \\
      \mathopen\wr \ottkw{false} \mathclose\wr  \ottsym{=} \ottkw{false}  &
      \mathopen\wr \lAngle  M  \Longrightarrow  V  \ottsym{:}  \ottsym{\{}   x \mathord: \tau   \mid  N  \ottsym{\}}  \rAngle \mathclose\wr  \ottsym{=}  \mathopen\wr V \mathclose\wr  
\end{tabular}
  

  \caption{Essence of a \PCFvD{} term.}
  \label{fig:essence}
\end{figure}

The essence of a \PCFvD{} term is defined in \figref{essence}, which is mostly
straightforward.  The choice of which part we take as the essence of a strong pair
is arbitrary because for a well-typed expression both parts have the same
essence.  Note that the essence of an active check \(\lAngle  M  \Longrightarrow  V  \ottsym{:}  \ottsym{\{}   x \mathord: \tau   \mid  N  \ottsym{\}}  \rAngle\) is
\(V\) rather than \(M\).  This is because \(V\) is the subject of
the expression.

\subsection{Operational Semantics of \PCFvD{}}

The operational semantics of \PCFvD{} consists of four relations
\(M  \rightharpoonup_\msansp  N\), \(M  \rightharpoonup_\msansc  \ottnt{C}\), \(M  \longrightarrow_\msansp  N\), and \(M  \longrightarrow_\msansc  \ottnt{C}\).
Bearing in mind the inclusion relation among syntactic categories, these
relations can be regarded as binary relations between commands.  The first two
are basic reduction relations, and the other two are contextual evaluation
relations (relations for whole programs).  Furthermore, the relations
subscripted by \(\msansp\) correspond to PCFv evaluation, that is,
\emph{essential evaluation}; and ones subscripted by \(\msansc\) correspond to
dynamic contract checking.  Dynamic checking is
nondeterministic because of \ruleref{RC-WedgeL/R}, \ruleref{EC-PairL}, and
\ruleref{EC-PairR}.

\subsubsection{Essential Evaluation \( \longrightarrow_\msansp \).}

\begin{figure}[t]
  \centering
  \input{figures/pcfvd-semantics1}
  \caption{Operational semantics of \PCFvD{} (1): essential evaluation.}
  \label{fig:pcfvd-semantics1}
\end{figure}

The essential evaluation, defined in \figref{pcfvd-semantics1}, defines the
evaluation of the essential part of a program; and thus, it is similar to
\( \longrightarrow_\textsf{PCF} \).  There are just three differences, that is: there are two
relations; there is no reduction rule for \( \ottkw{pred} ( \ottkw{O} ) \); and there is a
distinguished contextual evaluation rule \ruleref{EP-PairS}, which synchronizes
essential reductions of the elements in a strong pair.  The
synchronization in \ruleref{EP-PairS} is important since a strong pair requires
the essences of both elements to be the same.  The lack of predecessor
evaluation for \(\ottkw{O}\) is intentional: Our type system and run-time checking
guarantee that \(\ottkw{O}\) cannot occur as an argument to $\mathtt{pred}$.  

\subsubsection{Dynamic Checking \( \longrightarrow_\msansc \).}

\begin{figure}[t]
  \centering
  \input{figures/pcfvd-semantics2}
  \caption{Operational semantics of \PCFvD{} (2): reduction rules for dynamic checking.}
  \label{fig:pcfvd-semantics2}
\end{figure}

Dynamic checking is more complicated.  Firstly, we focus on reduction rules in
\figref{pcfvd-semantics2}.  The side-conditions on some rules are set so that an
evaluation is less nondeterministic (for example, without the side conditions,
both \ruleref{RC-Forget} and \ruleref{RC-Delay} could be applied to one
expression).

The rules irrelevant to intersection types (\ruleref{RC-Nat},
\ruleref{RC-Bool}, \ruleref{RC-Forget}, \ruleref{RC-Delay}, \ruleref{RC-Arrow},
\ruleref{RC-Waiting}, \ruleref{RC-Activate}, \ruleref{RC-Succeed}, and
\ruleref{RC-Fail}) are adopted from Sekiyama et al.~\cite{SekiyamaNI2015}, but
there is one difference about \ruleref{RC-Delay} and \ruleref{RC-Arrow}.  In the
original definition delayed checking is done by using lambda abstractions, that
is,
\begin{gather*}
  \ottsym{(}  V  \ottsym{:}  \sigma_{{\mathrm{1}}}  \rightarrow  \sigma_{{\mathrm{2}}}  \Rightarrow  \tau_{{\mathrm{1}}}  \rightarrow  \tau_{{\mathrm{2}}}  \ottsym{)}  \longrightarrow  \lambda   x \mathord: \tau_{{\mathrm{1}}}   \ottsym{.}  \ottsym{(}  V \, \ottsym{(}  x  \ottsym{:}  \tau_{{\mathrm{1}}}  \Rightarrow  \sigma_{{\mathrm{1}}}  \ottsym{)}  \ottsym{:}  \sigma_{{\mathrm{2}}}  \Rightarrow  \tau_{{\mathrm{2}}}  \ottsym{)}.
\end{gather*}
The reason we adopt a different way is just it makes technical development
easier.  Additionally, the way we adopt is not new---It is used in the
original work~\cite{FindlerF2002} on higher-order contract calculi.

The other rules are new ones we propose for dynamic checking of intersection
types.  As we have discussed in \secref{language}, a cast into an intersection
type is reduced into a pair of casts by \ruleref{RC-WedgeI}.  A cast from an
intersection type is done by \ruleref{RC-Delay}, \ruleref{RC-WedgeL/R} if the
target type is a function type.  Otherwise, if the target type is a first order
type, \ruleref{RC-WedgeN} and \ruleref{RC-WedgeB} are used, where we arbitrarily
choose the left side of the intersection type and the corresponding part of the
value since the source type is not used for dynamic checking of first-order
values.

\begin{figure}[t]
  \centering
  \input{figures/pcfvd-semantics3}
  \caption{Operational semantics of \PCFvD{} (3): contextual rules for dynamic checking.}
  \label{fig:pcfvd-semantics3}
\end{figure}

The contextual evaluation rules, defined in \figref{pcfvd-semantics3}, are
rather straightforward.  Be aware of the use of metavariables, for instance, the
use of \(N\) in \ruleref{EC-Ctx}; it implicitly means that \(M\) has not
been evaluated into \(\ottkw{blame}\) (so the rule does not overlap with
\ruleref{EB-Ctx}).  The first rule lifts the reduction relation to the
evaluation relation.  The next six rules express the case where a sub-expression
is successfully evaluated.  The rules \ruleref{EC-ActiveP} and
\ruleref{EC-ActiveC} mean that evaluation inside an active check is always
considered dynamic checking, even when it involves essential evaluation.  The rules
\ruleref{EC-PairL} and \ruleref{EC-PairR} mean that dynamic checking does not
synchronize because the elements in a strong pair may have different casts.  The
other rules express the case where dynamic checking has failed.  An expression
evaluates to \(\ottkw{blame}\) immediately---in one step---when a sub-expression
evaluates to \(\ottkw{blame}\).  Here is an example of execution of failing dynamic
checking.
\begin{alignat*}{1}
   \ottsym{(}  0  \ottsym{:}  \ottkw{nat}  \Rightarrow  \ottsym{\{}   x \mathord: \ottkw{nat}   \mid   x  >  0   \ottsym{\}}  \ottsym{)}  +  1 
   \longrightarrow \;& \lAngle  0  \mathrel?  \ottsym{\{}   x \mathord: \ottkw{nat}   \mid   x  >  0   \ottsym{\}}  \rAngle  +  1  \\
   \longrightarrow \;& \lAngle   0  >  0   \Longrightarrow  0  \ottsym{:}  \ottsym{\{}   x \mathord: \ottkw{nat}   \mid   x  >  0   \ottsym{\}}  \rAngle  +  1  \\
   \longrightarrow \;& \lAngle  \ottkw{false}  \Longrightarrow  0  \ottsym{:}  \ottsym{\{}   x \mathord: \ottkw{nat}   \mid   x  >  0   \ottsym{\}}  \rAngle  +  1  \\
   \longrightarrow \;&\ottkw{blame}
\end{alignat*}

\begin{definition}[Evaluation]
  The one-step evaluation relation of \PCFvD{}, denoted by \( \longrightarrow \), is
  defined as \( \longrightarrow_\msansp  \cup  \longrightarrow_\msansc \).  The multi-step evaluation relation
  of \PCFvD{}, denoted by \( \longrightarrow^\ast \), is the reflexive and transitive closure
  of \( \longrightarrow \).
\end{definition}

\subsection{Type System of \PCFvD{}}

\begin{figure}[t]
  \centering
  \input{figures/typing1}
  \caption{Type system of \PCFvD{} (1): well-formedness rules.}
  \label{fig:typing1}
\end{figure}

\begin{figure}[t]
  \centering
  \input{figures/typing2}
  \caption{Type system of \PCFvD{} (2): compile-time typing rules.}
  \label{fig:typing2}
\end{figure}

\begin{figure}[t]
  \centering
  \input{figures/typing3}
  \caption{Type system of \PCFvD{} (3): run-time typing rules.}
  \label{fig:typing3}
\end{figure}

The type system consists of three judgments: \( \Gamma \;\mathrm{ok} \), \( \Vdash  \tau \),
and \(\Gamma  \vdash  M  \ottsym{:}  \tau\), read ``\(\Gamma\) is well-formed'', ``\(\tau\) is
well-formed'', and ``\(M\) has \(\tau\) under \(\Gamma\),'' respectively.
They are defined inductively by the rules in Figures~\ref{fig:typing1},
\ref{fig:typing2} and \ref{fig:typing3}.

The rules for well-formed types check that an intersection type is restricted to
a refinement intersection type by the side condition \(  \mathopen\wr \sigma \mathclose\wr  \ottsym{=}  \mathopen\wr \tau \mathclose\wr  \) in
\ruleref{W-Wedge} and that the predicate in a refinement type is a Boolean
expression by \ruleref{W-Refine}.  Note that, since \PCFvD{} has no dependent
function type, all types are closed and the predicate of a refinement type only
depends on the parameter itself.

The typing rules, the rules for the third judgment, consist of two more
sub-categories: compile-time rules and run-time rules.  Compile-time rules are
for checking a program a programmer writes.  Run-time rules are for run-time
expressions and used to prove type soundness.  This distinction, which follows,
Belo et al.~\cite{BeloGIP2011}, is to make compile-time type checking decidable.

A large part of the compile-time rules are adapted from PCF, Sekiyama et
al.~\cite{SekiyamaNI2015}, and Liquori and Stolze~\cite{LiquoriS2018}.  Here we
explain some notable rules.  As an intersection type system, \ruleref{T-Pair},
\ruleref{T-Fst}, and \ruleref{T-Snd} stands for introduction and elimination
rules of intersection types (or we can explicitly introduce and/or eliminate
an intersection type by a cast).  The rule \ruleref{T-Pair} checks a strong pair
is composed by essentially the same expressions by \(  \mathopen\wr M \mathclose\wr  \ottsym{=}  \mathopen\wr N \mathclose\wr  \).  The
rule \ruleref{T-Pred} demands that the argument of predecessor shall not be
zero.  The premise \(  \mathopen\wr \sigma \mathclose\wr  \ottsym{=}  \mathopen\wr \tau \mathclose\wr  \) of the rule \ruleref{T-Cast} for casts
requires the essences of the source and target types to agree.  It amounts to
checking the two types \(\sigma\) and \(\tau\) are
compatible~\cite{SekiyamaNI2015}.


The run-time rules are from Sekiyama et al.~\cite{SekiyamaNI2015} with one extra
rule \ruleref{T-Delayed}.  The rule \ruleref{T-Delayed} is for a delayed
checking for function types, which restrict the source type so that it respects
the evaluation relation (there is no evaluation rule for a delayed checking in
which source type is a refinement type), and inherits the condition on the
source and target types from \ruleref{T-Cast}.  The side condition
\( N  [  x  \mapsto  V  ]   \longrightarrow^\ast  M\) on \ruleref{T-Active} is an invariant during evaluation,
that is, \(M\) is an intermediate state of the predicate checking.  This
invariant lasts until the final (successful) run-time checking state
\(\lAngle  \ottkw{true}  \Longrightarrow  V  \ottsym{:}  \ottsym{\{}   x \mathord: \tau   \mid  N  \ottsym{\}}  \rAngle\) and guarantees the checking result \(V\)
(obtained by \ruleref{RC-Succeed}) satisfies the predicate \(N\) by
\ruleref{T-Exact}.



\section{Properties}\label{sec:properties}

We start from properties of evaluation relations.  As we have mentioned,
\( \longrightarrow_\msansp \) is essential evaluation, and thus, it should simulate
\( \longrightarrow_\textsf{PCF} \); and \( \longrightarrow_\msansc \) is dynamic checking, and therefore, it should
not change the essence of the expression.  We formally state and
show these properties here.  Note that most properties require that the
expression before evaluation is well typed.  This is because the condition of
strong pairs is imposed by the type system.
\begin{lemma}\label{prop:pcfdet}
  If \(M  \longrightarrow_\textsf{PCF}  N\) and \(M  \longrightarrow_\textsf{PCF}  L\), then \( N \ottsym{=} L \).
\end{lemma}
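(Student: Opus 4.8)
The plan is to prove this determinism statement by a routine but careful induction whose real content is an auxiliary irreducibility fact for values. First I would establish a lemma stating that \emph{no value steps}: if $V$ is a value, then there is no $N$ with $V \longrightarrow_\textsf{PCF} N$. This is proved by induction on the structure of $V$. The only sub-case needing the induction hypothesis is $\ottkw{succ}(\overline{n}')$, whose sole candidate reduction is a \ruleref{PCF-Ctx} step under the frame $\ottkw{succ}(\Box)$, which would require $\overline{n}'$ to step, contradicting the hypothesis; the remaining values $\ottkw{O}$, $\ottkw{true}$, $\ottkw{false}$, and $\lambda x{:}\tau.M$ match no axiom and are decomposed by no frame, so they are trivially irreducible. (Note that a $\mu$-expression is \emph{not} a value, but it too matches only one rule, so it will cause no trouble later.)

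With that in hand, I would prove the lemma itself by induction on the derivation of $M \longrightarrow_\textsf{PCF} N$, doing case analysis on the last rule used and inverting the second derivation $M \longrightarrow_\textsf{PCF} L$ in each case. For the axiom cases the left-hand sides are syntactically disjoint, so the only competing derivation of $M \longrightarrow_\textsf{PCF} L$ would be a \ruleref{PCF-Ctx} step reducing an immediate subterm; but in every such axiom that subterm is a value ($\ottkw{O}$ or $\ottkw{succ}(\overline{n})$ for the predecessor and iszero axioms, $\ottkw{true}$ or $\ottkw{false}$ for the \ottkw{if} axioms, and the abstraction $\lambda x{:}\tau.M$ together with the value argument $V$ for \ruleref{PCF-Beta}), so the irreducibility lemma rules that step out and forces $L = N$. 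The \ruleref{PCF-Fix} case is even simpler, since a $\mu$-expression is decomposed by no frame and matched by no other axiom.

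The interesting case is \ruleref{PCF-Ctx}, where $M = \mscrE[M_0]$, $N = \mscrE[M_0']$ and $M_0 \longrightarrow_\textsf{PCF} M_0'$. Here I would case on the frame $\mscrE$. In each case the premise tells me $M_0$ is reducible, hence \emph{not} a value by the lemma above, and this single observation simultaneously (i) rules out every axiom that could otherwise fire at the root of $M$ — a predecessor, iszero, or \ottkw{if} axiom needs a value in the hole, and \ruleref{PCF-Beta} needs an abstraction there — and (ii) rules out the \emph{other} application frame: for $\mscrE = V\,\Box$ the competitor $\Box\,M_2$ would need $V$ to step, while for $\mscrE = \Box\,M_2$ the competitor $V\,\Box$ would need $M_0$ to be a value. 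What survives is exactly a \ruleref{PCF-Ctx} step under the \emph{same} frame reducing $M_0$ to some $M_0''$, and the induction hypothesis applied to $M_0 \longrightarrow_\textsf{PCF} M_0'$ and $M_0 \longrightarrow_\textsf{PCF} M_0''$ gives $M_0' = M_0''$, hence $N = L$.

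The main obstacle, such as it is, lies entirely in the application frames, where three rules (\ruleref{PCF-Beta} and the two contextual frames $\Box\,M_2$ and $V\,\Box$) superficially overlap; disentangling them is precisely what the ``values do not step'' lemma accomplishes. Consequently almost all of the work is front-loaded into that auxiliary result, and the remaining frame-by-frame and axiom-by-axiom case analysis is mechanical — which is exactly the shape one expects for the locally-nameless Coq development, where the inversion and the value-irreducibility side conditions are discharged automatically.
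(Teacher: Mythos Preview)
Your proposal is correct and follows exactly the approach the paper indicates: a routine induction on one of the two derivations, with the auxiliary ``values do not step'' lemma doing the work of disambiguating the overlapping application frames and the $\beta$-axiom. The paper's own proof is simply the one-line remark that this induction is routine, so your write-up is a faithful expansion of it.
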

\begin{proof}
  The proof is routine by induction on one of the given derivations. \qed{}
\end{proof}
\begin{lemma}\label{prop:psim}
  If \(    \vdash  M  \ottsym{:}  \tau\) and \(M  \longrightarrow_\msansp  N\), then \( \mathopen\wr M \mathclose\wr   \longrightarrow_\textsf{PCF}   \mathopen\wr N \mathclose\wr \).
\end{lemma}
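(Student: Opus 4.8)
The plan is to proceed by induction on the derivation of \(M \longrightarrow_\msansp N\), after establishing two routine auxiliary facts. First, (i) essence commutes with substitution, \(\mathopen\wr M [ x \mapsto N ] \mathclose\wr = \mathopen\wr M \mathclose\wr [ x \mapsto \mathopen\wr N \mathclose\wr ]\), proved by induction on \(M\) over terms. Second, (ii) the essence of a \PCFvD{} value is a PCFv value; more precisely the essence of a recursion body \(B\) is a \(\lambda\)-abstraction and the essence of an interface type \(I\) is a function type, all by straightforward structural induction (the strong-pair and delayed-check cases simply pass to the chosen component). I will also invert the typing rules to feed well-typedness of the relevant subterms into the induction hypothesis in the congruence cases; the base cases do not actually rely on the typing assumption.

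For the base case \ruleref{EP-Red} I case-split on the reduction \(M \rightharpoonup_\msansp N\). Each of \ruleref{RP-Pred}, \ruleref{RP-IsZero-T}, \ruleref{RP-IsZero-F}, \ruleref{RP-If-T}, and \ruleref{RP-If-F} matches the corresponding PCFv axiom, once one notes that essence acts homomorphically on these constructors and fixes numerals (\(\mathopen\wr \overline{n} \mathclose\wr = \overline{n}\)). For \ruleref{RP-Beta} we have \(\mathopen\wr \ottsym{(} \lambda x \mathord{:} \tau \ottsym{.} M_0 \ottsym{)} \, V \mathclose\wr = \ottsym{(} \lambda x \mathord{:} \mathopen\wr \tau \mathclose\wr \ottsym{.} \mathopen\wr M_0 \mathclose\wr \ottsym{)} \, \mathopen\wr V \mathclose\wr\); since \(\mathopen\wr V \mathclose\wr\) is a value by (ii), \ruleref{PCF-Beta} fires and yields \(\mathopen\wr M_0 \mathclose\wr [ x \mapsto \mathopen\wr V \mathclose\wr ]\), which equals \(\mathopen\wr M_0 [ x \mapsto V ] \mathclose\wr = \mathopen\wr N \mathclose\wr\) by (i). For \ruleref{RP-Fix}, \(\mathopen\wr \mu f \mathord{:} I . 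B \mathclose\wr = \mu f \mathord{:} \mathopen\wr I \mathclose\wr . \mathopen\wr B \mathclose\wr\) is a legal PCFv recursion form because \(\mathopen\wr I \mathclose\wr\) is an arrow type and \(\mathopen\wr B \mathclose\wr\) a \(\lambda\)-abstraction by (ii), so \ruleref{PCF-Fix} applies and, again using (i), reduces it to \(\mathopen\wr B [ f \mapsto \mu f \mathord{:} I . B ] \mathclose\wr = \mathopen\wr N \mathclose\wr\).

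The congruence cases are the crux. For \ruleref{EP-Ctx}, with \(M = \mscrE[ M_0 ]\), \(N = \mscrE[ N_0 ]\) and \(M_0 \longrightarrow_\msansp N_0\), I split on the shape of \(\mscrE\). For the frames \(\ottkw{succ} ( \Box )\), \(\ottkw{pred} ( \Box )\), \(\ottkw{iszero} ( \Box )\), \(\ottkw{if} \, \Box \, \ottkw{then} \, M \, \ottkw{else} \, N\), \(\Box \, M\), and \(V \, \Box\), the essence sends \(\mscrE[ M_0 ]\) to \(\mscrE'[ \mathopen\wr M_0 \mathclose\wr ]\) for the corresponding PCFv frame \(\mscrE'\) (for \(V \, \Box\) this uses (ii) to see \(\mathopen\wr V \mathclose\wr\) is a value), so the induction hypothesis together with \ruleref{PCF-Ctx} gives the step. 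For the remaining frames \(\pi_\mathrm{1} ( \Box )\), \(\pi_\mathrm{2} ( \Box )\), \(\ottsym{(} \Box \ottsym{:} \sigma \Rightarrow \tau \ottsym{)}\), and \(\lAngle \Box \mathrel? \ottsym{\{} x \mathord{:} \tau \mid M \ottsym{\}} \rAngle\), the essence \emph{erases} the frame, i.e.\ \(\mathopen\wr \mscrE[ M_0 ] \mathclose\wr = \mathopen\wr M_0 \mathclose\wr\) and \(\mathopen\wr \mscrE[ N_0 ] \mathclose\wr = \mathopen\wr N_0 \mathclose\wr\), so the induction hypothesis delivers the required \(\longrightarrow_\textsf{PCF}\)-step \emph{directly}, with no appeal to \ruleref{PCF-Ctx}. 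Well-typedness of \(M_0\) is obtained in each case by inverting the typing derivation of \(\mscrE[ M_0 ]\). Finally, for \ruleref{EP-PairS} with \(\langle M , N \rangle \longrightarrow_\msansp \langle M' , N' \rangle\), we have \(\mathopen\wr \langle M , N \rangle \mathclose\wr = \mathopen\wr M \mathclose\wr\) and \(\mathopen\wr \langle M' , N' \rangle \mathclose\wr = \mathopen\wr M' \mathclose\wr\), so only the left premise \(M \longrightarrow_\msansp M'\) is needed, and the induction hypothesis (using \(\vdash M \ottsym{:} \sigma\) obtained by inverting \ruleref{T-Pair}) closes the case.

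The main obstacle is exactly the treatment of the contract-relevant frames in \ruleref{EP-Ctx}: recognising that the projection, cast, and waiting-check frames vanish under the essence, so that an essential step taken underneath them corresponds to an essence-level step with the frame \emph{removed} rather than to a PCFv \ruleref{PCF-Ctx} step. Once the partition of \(\mscrE\) into essence-preserving and essence-erased frames is made explicit, and the substitution lemma (i) and value lemma (ii) are in hand, the remaining calculations are mechanical.
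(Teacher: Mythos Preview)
Your proposal is correct and follows exactly the approach the paper indicates: induction on the derivation of \(M \longrightarrow_\msansp N\). The paper only records this in a single sentence, so your proof simply makes explicit the auxiliary facts (essence commutes with substitution; essences of values, recursion bodies, and interface types land in the right PCFv syntactic classes) and the case split on evaluation frames---in particular the observation that the projection, cast, and waiting-check frames are erased by the essence so that the induction hypothesis finishes those cases directly without \ruleref{PCF-Ctx}.
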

\begin{proof}
  The proof is by induction on the given evaluation derivation. \qed{}
\end{proof}
The following corollary is required to prove the preservation property.
\begin{corollary}\label{prop:psync}
  If \(    \vdash  M  \ottsym{:}  \sigma\), \(    \vdash  N  \ottsym{:}  \tau\), \(M  \longrightarrow_\msansp  M'\), \(N  \longrightarrow_\msansp  N'\), and \(  \mathopen\wr M \mathclose\wr  \ottsym{=}  \mathopen\wr N \mathclose\wr  \); then \(  \mathopen\wr M' \mathclose\wr  \ottsym{=}  \mathopen\wr N' \mathclose\wr  \).
\end{corollary}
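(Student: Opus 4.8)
The plan is to obtain this as an immediate composition of the two preceding results: the simulation lemma (Lemma~\ref{prop:psim}), which says that a single essential step is mirrored by a single PCF step on essences, and determinism of PCF reduction (Lemma~\ref{prop:pcfdet}). No fresh induction is needed, which is exactly why the statement is labelled a corollary rather than a lemma; the real work has already been discharged upstream.

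Concretely, I would proceed in three steps. First, apply Lemma~\ref{prop:psim} to the hypotheses \(\vdash M : \sigma\) and \(M \longrightarrow_\msansp M'\) to obtain the PCF step \(\mathopen\wr M \mathclose\wr \longrightarrow_\textsf{PCF} \mathopen\wr M' \mathclose\wr\); symmetrically, applying the same lemma to \(\vdash N : \tau\) and \(N \longrightarrow_\msansp N'\) yields \(\mathopen\wr N \mathclose\wr \longrightarrow_\textsf{PCF} \mathopen\wr N' \mathclose\wr\). This is the sole role of the two typing hypotheses: each invocation of Lemma~\ref{prop:psim} requires its subject to be well typed, and nothing else about \(\sigma\) or \(\tau\) is used. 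Second, I would use the hypothesis \(\mathopen\wr M \mathclose\wr = \mathopen\wr N \mathclose\wr\) to rewrite the second step as \(\mathopen\wr M \mathclose\wr \longrightarrow_\textsf{PCF} \mathopen\wr N' \mathclose\wr\), so that the two derived reductions now share the common source \(\mathopen\wr M \mathclose\wr\). Third, Lemma~\ref{prop:pcfdet} applied to \(\mathopen\wr M \mathclose\wr \longrightarrow_\textsf{PCF} \mathopen\wr M' \mathclose\wr\) and \(\mathopen\wr M \mathclose\wr \longrightarrow_\textsf{PCF} \mathopen\wr N' \mathclose\wr\) forces \(\mathopen\wr M' \mathclose\wr = \mathopen\wr N' \mathclose\wr\), which is the desired conclusion.

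I do not anticipate a genuine obstacle here; the only point meriting care is that the simulation must be \emph{lockstep}, that is, a single \(\longrightarrow_\msansp\) step maps to exactly one \(\longrightarrow_\textsf{PCF}\) step (not zero, and not several), since otherwise determinism could not be invoked to identify the two targets. This lockstep property is precisely what Lemma~\ref{prop:psim} provides, and the essence equality is what aligns the two reductions at a shared source term before the appeal to Lemma~\ref{prop:pcfdet}. Intuitively, the corollary records that essential evaluation respects the strong-pair discipline: two expressions that agree on their contract-irrelevant skeletons continue to agree after each takes an essential step, which is exactly the invariant needed to keep the two components of a strong pair synchronized under the essential-reduction rule \ruleref{EP-PairS}.
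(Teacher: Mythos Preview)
Your proposal is correct and matches the paper's intended approach: the corollary is stated immediately after Lemmas~\ref{prop:pcfdet} and~\ref{prop:psim} with no separate proof, and your composition of those two lemmas is precisely the derivation the placement implies.
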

\begin{lemma}\label{prop:cpreserve}
  If \(    \vdash  M  \ottsym{:}  \tau\) and \(M  \longrightarrow_\msansc  N\), then \(  \mathopen\wr M \mathclose\wr  \ottsym{=}  \mathopen\wr N \mathclose\wr  \).
\end{lemma}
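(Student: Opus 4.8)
The plan is to induct on the derivation of \(M \longrightarrow_\msansc N\), following the contextual rules of \figref{pcfvd-semantics3}. Since the conclusion is stated with the expression metavariable \(N\) rather than a command \(\ottnt{C}\), only the rules \ruleref{EC-Red}, \ruleref{EC-Ctx}, \ruleref{EC-ActiveP}, \ruleref{EC-ActiveC}, \ruleref{EC-PairL} and \ruleref{EC-PairR} can apply; the \textsc{EB}-rules are excluded because they produce \(\ottkw{blame}\), whose essence is not defined. The whole argument is driven by the design choice recorded in \figref{essence}: the essence erases exactly the constructs that \(\longrightarrow_\msansc\) manipulates (casts, refinement wrappers, delayed/waiting checks, and all but the left component of a strong pair).

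For the base case \ruleref{EC-Red} I would do a case analysis on the reduction rule \(M \rightharpoonup_\msansc N\) of \figref{pcfvd-semantics2}. In nearly every case the claim follows by merely unfolding the essence definition: for the cast rules (\ruleref{RC-Nat}, \ruleref{RC-Bool}, \ruleref{RC-Forget}, \ruleref{RC-Delay}, \ruleref{RC-WedgeN}, \ruleref{RC-WedgeB}, \ruleref{RC-WedgeI}, \ruleref{RC-Waiting}) both sides have essence \(\mathopen\wr V \mathclose\wr\), since a cast, a refinement wrapper, a waiting check, and the left component of a pair are all erased; for \ruleref{RC-Arrow} and \ruleref{RC-WedgeL/R} both sides have essence \(\mathopen\wr V_{{\mathrm{1}}} \mathclose\wr\,\mathopen\wr V_{{\mathrm{2}}} \mathclose\wr\); and \ruleref{RC-Activate}, \ruleref{RC-Succeed} are immediate because the essence of an active check is its subject. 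The one case that genuinely consumes the typing hypothesis is \ruleref{RC-Proj} with \(i = 2\): here \(\mathopen\wr \pi_\mathrm{2} ( \langle  V_{{\mathrm{1}}}  \ottsym{,}  V_{{\mathrm{2}}}  \rangle ) \mathclose\wr = \mathopen\wr V_{{\mathrm{1}}} \mathclose\wr\) whereas the contractum has essence \(\mathopen\wr V_{{\mathrm{2}}} \mathclose\wr\), so I must invert the typing derivation—through \ruleref{T-Snd} (resp.\ \ruleref{T-Fst}) and then the strong-pair rule \ruleref{T-Pair}—to recover its side condition \(\mathopen\wr V_{{\mathrm{1}}} \mathclose\wr = \mathopen\wr V_{{\mathrm{2}}} \mathclose\wr\).

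For the contextual cases, \ruleref{EC-ActiveP}, \ruleref{EC-ActiveC} and \ruleref{EC-PairR} are discharged \emph{without} the induction hypothesis: the essence of \(\lAngle  M  \Longrightarrow  V  \ottsym{:}  \ottsym{\{}   x \mathord: \tau   \mid  N  \ottsym{\}}  \rAngle\) is \(\mathopen\wr V \mathclose\wr\) and that of \(\langle  M  \ottsym{,}  N  \rangle\) is \(\mathopen\wr M \mathclose\wr\), so a step taken in a position the essence discards cannot change the essence. The remaining cases \ruleref{EC-PairL} and \ruleref{EC-Ctx} do use the induction hypothesis: by inversion/decomposition of the typing derivation I extract well-typedness of the stepping sub-expression, apply the hypothesis there, and conclude. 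For \ruleref{EC-PairL} this is immediate from \(\mathopen\wr \langle  M  \ottsym{,}  N  \rangle \mathclose\wr = \mathopen\wr M \mathclose\wr\); for \ruleref{EC-Ctx} it rests on an auxiliary congruence lemma asserting that \(\mathopen\wr M \mathclose\wr = \mathopen\wr N \mathclose\wr\) implies \(\mathopen\wr \mscrE  [  M  ]  \mathclose\wr = \mathopen\wr \mscrE  [  N  ]  \mathclose\wr\), proved by a routine induction on \(\mscrE\) using compositionality of essence. Here it matters that the frames \(\mscrE\) are exactly the non-binding, non-pair, non-active-check frames, so that essence acts as a straightforward homomorphism on them.

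I expect the main obstacle to be not the case analysis itself but the supporting inversion behind \ruleref{RC-Proj}. One must establish that whenever a strong pair \(\langle  V_{{\mathrm{1}}}  \ottsym{,}  V_{{\mathrm{2}}}  \rangle\) is assigned an intersection type its components have equal essence, and this requires threading the inversion through the run-time rules \ruleref{T-Forget} and \ruleref{T-Exact}, which may wrap the pair in refinement types before a \ruleref{T-Pair} node is reached. The clean way is to prove separately, by induction on the typing derivation, that \(\vdash \langle  V_{{\mathrm{1}}}  \ottsym{,}  V_{{\mathrm{2}}}  \rangle : \rho\) implies \(\mathopen\wr V_{{\mathrm{1}}} \mathclose\wr = \mathopen\wr V_{{\mathrm{2}}} \mathclose\wr\) (the \ruleref{T-Forget}/\ruleref{T-Exact} cases simply reuse the hypothesis on the premise, since the subject remains the same pair), and then feed this lemma into the \ruleref{RC-Proj} case. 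Everything else is mechanical, enabled by the fact that \(\longrightarrow_\msansc\) touches only contract-relevant structure.
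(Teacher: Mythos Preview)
Your proposal is correct and follows the same approach as the paper, namely induction on the derivation of \(M \longrightarrow_\msansc N\); the paper's proof sketch is simply ``by induction on the given evaluation derivation,'' and you have filled in the case analysis accurately, including the one genuinely nontrivial point (using the strong-pair side condition via typing inversion for \ruleref{RC-Proj} with \(i=2\)).
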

\begin{proof}
  The proof is by induction on the given evaluation derivation. \qed{}
\end{proof}

Now we can have the following theorem as a corollary of \propref{psim} and
\propref{cpreserve}.  It guarantees the essential computation in \PCFvD{} is the
same as the PCFv computation as far as the computation does not fail.  In other
words, run-time checking may introduce blame but otherwise does not affect the
essential computation.

\begin{theorem}
  If \(    \vdash  M  \ottsym{:}  \tau\) and \(M  \longrightarrow  N\), then \( \mathopen\wr M \mathclose\wr   \longrightarrow_\textsf{PCF}^\ast   \mathopen\wr N \mathclose\wr \).
\end{theorem}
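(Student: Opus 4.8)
The plan is to exploit the fact that the two preceding lemmas have already isolated exactly the two ways a single evaluation step can arise. Since $\longrightarrow$ is defined as $\longrightarrow_\msansp \cup \longrightarrow_\msansc$, I would begin by case-splitting on which component relation witnesses the step $M \longrightarrow N$; each branch then collapses immediately to one of \propref{psim} and \propref{cpreserve}, so that the theorem really is a corollary and carries out no fresh inductive work of its own.

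In the essential-evaluation branch, $M \longrightarrow_\msansp N$, I invoke \propref{psim} with the hypothesis $\vdash M : \tau$ to obtain the single PCFv step $\mathopen\wr M \mathclose\wr \longrightarrow_\textsf{PCF} \mathopen\wr N \mathclose\wr$. Because $\longrightarrow_\textsf{PCF}^\ast$ is by definition the reflexive-transitive closure of $\longrightarrow_\textsf{PCF}$, a single step is in particular a one-step member of the closure, yielding $\mathopen\wr M \mathclose\wr \longrightarrow_\textsf{PCF}^\ast \mathopen\wr N \mathclose\wr$. In the dynamic-checking branch, $M \longrightarrow_\msansc N$, I instead apply \propref{cpreserve} (again using $\vdash M : \tau$) to conclude $\mathopen\wr M \mathclose\wr = \mathopen\wr N \mathclose\wr$; reflexivity of the closure then gives $\mathopen\wr M \mathclose\wr \longrightarrow_\textsf{PCF}^\ast \mathopen\wr N \mathclose\wr$ in zero steps. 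These two cases exhaust the definition of $\longrightarrow$, and the theorem follows.

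There is no genuine obstacle at this level: all the substantive content—the step-for-step correspondence of essential reductions with PCFv, and the invariance of essences under contract checking—has been discharged by the two lemmas. The only point worth flagging is the asymmetry in the conclusion, namely that it uses $\longrightarrow_\textsf{PCF}^\ast$ rather than $\longrightarrow_\textsf{PCF}$; this reflexive slack is precisely what accommodates the dynamic-checking branch, where a pure contract step advances the computation yet leaves the essence fixed, contributing zero PCFv steps. I would also note that the statement implicitly ranges $N$ over expressions, so a step of the form $M \longrightarrow_\msansc \ottkw{blame}$ lies outside its scope; this is consistent with the intended reading that run-time checking may introduce \(\ottkw{blame}\) but otherwise does not perturb the essential computation.
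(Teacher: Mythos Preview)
Your proposal is correct and matches the paper's approach exactly: the paper states the theorem as an immediate corollary of \propref{psim} and \propref{cpreserve}, obtained by the same case split on $\longrightarrow_\msansp$ versus $\longrightarrow_\msansc$. Your observations about why $\longrightarrow_\textsf{PCF}^\ast$ (rather than $\longrightarrow_\textsf{PCF}$) is needed and about $\ottkw{blame}$ being outside the scope of the metavariable $N$ are accurate and consistent with the paper's reading.
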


\AI{We could have shown the manifest calculus is a conservative extension of PCFv.}

\subsection{Type Soundness}

We conclude this section with type soundness.  Firstly, we show a substitution
property; and using it, we show the preservation property.
\begin{lemma}\label{prop:subst}
  If \(\Gamma_{{\mathrm{1}}}  \ottsym{,}   x \mathord: \sigma   \ottsym{,}  \Gamma_{{\mathrm{2}}}  \vdash  M  \ottsym{:}  \tau\) and \(\Gamma_{{\mathrm{1}}}  \vdash  N  \ottsym{:}  \sigma\), then \(\Gamma_{{\mathrm{1}}}  \ottsym{,}  \Gamma_{{\mathrm{2}}}  \vdash   M  [  x  \mapsto  N  ]   \ottsym{:}  \tau\).
\end{lemma}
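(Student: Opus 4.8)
The plan is to argue by induction on the derivation of $\Gamma_1, x\mathord:\sigma, \Gamma_2 \vdash M : \tau$, generalising over $\Gamma_2$, $M$, and $\tau$ so that the binder cases go through. Before starting the induction I would record three auxiliary facts. First, since \PCFvD{} has no dependent function types, every well-formed type is closed; hence $[x\mapsto N]$ never touches the type annotations, and every side condition that speaks only about types---the compatibility conditions $\mathopen\wr \sigma \mathclose\wr = \mathopen\wr \tau \mathclose\wr$ of \ruleref{T-Cast} and \ruleref{W-Wedge}, the ``not a refinement'' conditions, and every $\Vdash \tau$ premise---is literally unchanged by the substitution. Second, essence commutes with substitution: $\mathopen\wr M[x\mapsto N]\mathclose\wr = \mathopen\wr M \mathclose\wr[x \mapsto \mathopen\wr N \mathclose\wr]$, proved by a routine induction on $M$ following \figref{essence} (the erasing clauses for casts, projections, and strong pairs make the two sides agree). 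Third, weakening: if $\Gamma_1,\Gamma_2$ is well formed and $\Gamma_1 \vdash N : \sigma$, then $\Gamma_1, \Gamma_2 \vdash N : \sigma$, by a separate straightforward induction. I also use that dropping the middle binding from a well-formed $\Gamma_1, x\mathord:\sigma, \Gamma_2$ again yields a well-formed context---again because the $\Vdash$ judgment is context-independent---so $\Gamma_1, \Gamma_2 \;\mathrm{ok}$ is available throughout.

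With these in hand most cases are mechanical. The leaf rules (\ruleref{T-Zero}, \ruleref{T-True}, \ruleref{T-False}) need only $\Gamma_1, \Gamma_2 \;\mathrm{ok}$. In \ruleref{T-Var}, if the subject is $x$ then $\tau = \sigma$ (by uniqueness of bindings in a well-formed context) and $x[x\mapsto N] = N$, so we conclude by weakening $\Gamma_1 \vdash N : \sigma$ to $\Gamma_1, \Gamma_2$; if it is some $y \neq x$, that variable still resides in $\Gamma_1, \Gamma_2$ and we reapply \ruleref{T-Var}. The congruence rules (\ruleref{T-Succ}, \ruleref{T-Pred}, \ruleref{T-IsZero}, \ruleref{T-If}, \ruleref{T-App}, \ruleref{T-Fst}, \ruleref{T-Snd}, \ruleref{T-Cast}) follow by applying the induction hypothesis to each premise and reassembling, since substitution distributes over every term former and the type side conditions are untouched. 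For \ruleref{T-Pair} the only extra obligation is the essence premise $\mathopen\wr M_1 \mathclose\wr = \mathopen\wr M_2 \mathclose\wr$; after substitution the two essences become $\mathopen\wr M_1\mathclose\wr[x\mapsto\mathopen\wr N\mathclose\wr]$ and $\mathopen\wr M_2\mathclose\wr[x\mapsto\mathopen\wr N\mathclose\wr]$ by the commutation fact, and these agree because the originals did.

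For the binder rules \ruleref{T-Abs} and \ruleref{T-Fix} I would first $\alpha$-rename the bound variable to be fresh for $x$, for $\mathrm{fv}(N)$, and for $\Gamma_1,\Gamma_2$, which lets the substitution push under the binder; applying the induction hypothesis with $\Gamma_2$ extended by the new binding then yields the body's typing, and reassembling via the same rule gives the result. The remaining cases are the run-time rules \ruleref{T-Delayed}, \ruleref{T-Waiting}, \ruleref{T-Active}, \ruleref{T-Forget}, and \ruleref{T-Exact}. Here the crucial observation is that their subjects are typed under the \emph{empty} environment in the premises, hence are closed; consequently $[x\mapsto N]$ acts as the identity on the entire run-time expression (its type annotations being closed as well), and every remaining premise---including the evaluation side conditions $N[x\mapsto V]\longrightarrow^\ast M$ of \ruleref{T-Active} and $N[x\mapsto V]\longrightarrow^\ast\ottkw{true}$ of \ruleref{T-Exact}, which mention only closed terms---survives verbatim, so we merely reapply the rule using $\Gamma_1,\Gamma_2\;\mathrm{ok}$.

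The main obstacle is not any single case but getting the infrastructure exactly right: the essence-commutation lemma must be aligned with the chosen $\alpha$-convention so that it is well defined on $\alpha$-equivalence classes, and one must be disciplined about the closedness of the run-time subterms, since that is precisely what makes substitution trivial there and what the mechanised (locally nameless, cofinitely quantified) development encodes through the empty-context premises. Once those points are settled, the induction is entirely standard.
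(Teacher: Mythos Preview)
Your proposal is correct and follows exactly the approach the paper indicates: induction on the typing derivation of $M$. The paper's proof is a single line (``by induction on the derivation for $M$''), and your write-up is a faithful and accurate expansion of that induction, including the right auxiliary facts (closedness of types, essence-substitution commutation for \ruleref{T-Pair}, weakening for \ruleref{T-Var}, and the empty-context premises making the run-time cases trivial).
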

\begin{proof}
  The proof is by induction on the derivation for \(M\). \qed{}
\end{proof}
\begin{theorem}[Preservation]
  If \(    \vdash  M  \ottsym{:}  \tau\) and \(M  \longrightarrow  N\), then \(    \vdash  N  \ottsym{:}  \tau\).
\end{theorem}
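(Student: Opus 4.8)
The plan is to prove preservation by induction on the derivation of \( M  \longrightarrow  N \). Since \( \longrightarrow  =  \longrightarrow_\msansp  \cup  \longrightarrow_\msansc \), the cases split into the essential-evaluation rules of \figref{pcfvd-semantics1} and the dynamic-checking rules of \figref{pcfvd-semantics3}, each of which ultimately bottoms out in a base reduction from \figref{pcfvd-semantics2}. Before starting, I would assemble the usual supporting machinery: generation (inversion) lemmas that read off the premises of the last typing rule from the shape of a term; a context-replacement lemma stating that \( \vdash   \mscrE  [  M  ]   \ottsym{:}  \tau \) implies \(   \vdash  M  \ottsym{:}  \sigma \) for some \(\sigma\) and, conversely, that \(   \vdash  M'  \ottsym{:}  \sigma \) gives \( \vdash   \mscrE  [  M'  ]   \ottsym{:}  \tau \); and canonical-forms facts for values. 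The \propref{subst} substitution lemma, \propref{psync}, and \propref{cpreserve} are already available and will do the heavy lifting in the binder and strong-pair cases.

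For the \( \longrightarrow_\msansp \) fragment the base reductions are routine: \ruleref{RP-Beta} and \ruleref{RP-Fix} follow from \propref{subst} after inverting \ruleref{T-Abs} and \ruleref{T-Fix}, and the arithmetic and conditional axioms are immediate from their typing rules. The contextual rule \ruleref{EP-Ctx} is discharged by the replacement lemma together with the induction hypothesis. The delicate \(\msansp\)-case is \ruleref{EP-PairS}: here both components step, and I must re-establish the side condition \(  \mathopen\wr M' \mathclose\wr  \ottsym{=}  \mathopen\wr N' \mathclose\wr  \) of \ruleref{T-Pair}; this is exactly what \propref{psync} provides, given the two subderivations and the original equality \(  \mathopen\wr M \mathclose\wr  \ottsym{=}  \mathopen\wr N \mathclose\wr  \).

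For the \( \longrightarrow_\msansc \) fragment I would work through the cast reductions one by one, each time inverting the relevant typing rule and rebuilding a derivation while checking the essence-compatibility conditions \(  \mathopen\wr \sigma \mathclose\wr  \ottsym{=}  \mathopen\wr \tau \mathclose\wr  \). \ruleref{RC-Forget} uses \ruleref{T-Forget}; \ruleref{RC-Delay} and \ruleref{RC-Waiting} produce run-time expressions typed by \ruleref{T-Delayed} and \ruleref{T-Waiting}, whose side conditions come straight from \ruleref{T-Cast}; \ruleref{RC-WedgeI} yields a strong pair whose \ruleref{T-Pair} essence condition follows from the compatibility premise of the original cast; and \ruleref{RC-WedgeL/R}, \ruleref{RC-WedgeN}, \ruleref{RC-WedgeB} are handled by inverting the intersection type with \ruleref{T-Fst} and \ruleref{T-Snd}. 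The two cases carrying real content are \ruleref{RC-Arrow}, where I invert \ruleref{T-Delayed} and the surrounding \ruleref{T-App} and split \(  \mathopen\wr \sigma \mathclose\wr  \ottsym{=}  \mathopen\wr \tau_{{\mathrm{1}}}  \rightarrow  \tau_{{\mathrm{2}}} \mathclose\wr  \) into the component equalities that justify the two new casts, and \ruleref{RC-Succeed}: there the target type is the refinement \( \ottsym{\{}   x \mathord: \tau   \mid  M  \ottsym{\}} \), and concluding \(   \vdash  V  \ottsym{:}  \ottsym{\{}   x \mathord: \tau   \mid  M  \ottsym{\}} \) requires \ruleref{T-Exact}, whose hypothesis \( M  [  x  \mapsto  V  ]   \longrightarrow^\ast  \ottkw{true} \) is supplied by the evaluation invariant recorded in \ruleref{T-Active}. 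The contextual \(\msansc\)-rules are then mechanical: \ruleref{EC-Ctx} uses replacement; \ruleref{EC-PairL} and \ruleref{EC-PairR} re-establish the \ruleref{T-Pair} essence condition via \propref{cpreserve}, since the stepped component keeps its essence; and \ruleref{EC-ActiveP}, \ruleref{EC-ActiveC} preserve the \ruleref{T-Active} invariant by prepending one step to \( N  [  x  \mapsto  V  ]   \longrightarrow^\ast  M \). The \ruleref{EB-}rules and \ruleref{RC-Fail} produce \(\ottkw{blame}\), which falls outside the expression-valued conclusion and so does not arise here.

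I expect the main obstacle to be the bookkeeping around essences rather than any single reduction: the well-formedness of every strong pair and cast that the reduction rules create or rearrange hinges on essence equalities, and these must be threaded through carefully---synchronizing the two components under \ruleref{EP-PairS} via \propref{psync}, preserving them under asynchronous dynamic steps via \propref{cpreserve}, and decomposing \(  \mathopen\wr \sigma \mathclose\wr  \ottsym{=}  \mathopen\wr \tau_{{\mathrm{1}}}  \rightarrow  \tau_{{\mathrm{2}}} \mathclose\wr  \) correctly in \ruleref{RC-Arrow}. The second genuinely non-cosmetic point is \ruleref{RC-Succeed}, since it is where the operational invariant stored in the run-time typing rules is cashed out into an actual refinement-type membership through \ruleref{T-Exact}; getting the \ruleref{T-Active} invariant to line up with the hypothesis of \ruleref{T-Exact} is the crux of tying the run-time rules together.
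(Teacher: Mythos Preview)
Your proposal is correct and hits every substantive point the paper relies on---\propref{subst} for the binder cases, \propref{psync} for \ruleref{EP-PairS}, and \propref{cpreserve} for the asynchronous pair steps---and your treatment of \ruleref{RC-Succeed} via the \ruleref{T-Active} invariant feeding into \ruleref{T-Exact} is exactly right. The one structural difference is that you induct on the evaluation derivation, whereas the paper proves preservation for \( \longrightarrow_\msansp \) and \( \longrightarrow_\msansc \) separately, each by induction on the \emph{typing} derivation. Your route buys a cleaner case split aligned with the reduction rules, at the cost of needing the generation and context-replacement lemmas you mention up front; the paper's route keeps the typing subderivations immediately at hand but must case-analyse the possible redexes inside each typing case. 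Both are standard choices and neither avoids the essence-bookkeeping you correctly flag as the real work.
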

\begin{proof}
  We prove preservation properties for each \( \longrightarrow_\msansp \) and \( \longrightarrow_\msansc \) and
  combine them.  Both proofs are done by induction on the given typing
  derivation.  For the case in which substitution happens, we use
  \propref{subst} as usual.  For the context evaluation for strong pairs, we use
  \propref{psync} and \propref{cpreserve} to guarantee the side-condition of
  strong pairs.  \qed{}
\end{proof}

Next we show the value inversion property, which guarantees a value of a
refinement type satisfies its predicate.  For \PCFvD{}, this property can be
quite easily shown since \PCFvD{} does not have dependent function types, while
previous manifest contract systems need quite complicated
reasoning~\cite{SekiyamaNI2015, SekiyamaI2017, NishidaI2018}.  The property
itself is proven by using the following two, which are for strengthening an
induction hypothesis.

\begin{definition}
  We define a relation between values and types, written \(V  \models  \tau\), by the
  following rules.
  \begin{rules}
    \infbi{
      \(V  \models  \tau\)
    }{
      \( M  [  x  \mapsto  V  ]   \longrightarrow^\ast  \ottkw{true}\)
    }{
      \(V  \models  \ottsym{\{}   x \mathord: \tau   \mid  M  \ottsym{\}}\)
    }
    \infun{
      \(\ottsym{(}   \tau  \neq  \ottsym{\{}   x \mathord: \sigma   \mid  M  \ottsym{\}}   \ottsym{)}\)
    }{
      \(V  \models  \tau\)
    }
  \end{rules}
\end{definition}

\begin{lemma}\label{prop:vigen}
  If \(    \vdash  V  \ottsym{:}  \tau\), then \(V \models \tau\).
\end{lemma}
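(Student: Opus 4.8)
The plan is to proceed by induction on the derivation of \(\vdash V : \tau\), with a case analysis on the last typing rule applied. The key preliminary observation is that, because the subject is a \emph{value} and the context is empty, almost every rule of the type system is immediately inapplicable: the last rule must be one of \ruleref{T-Zero}, \ruleref{T-Succ}, \ruleref{T-True}, \ruleref{T-False}, \ruleref{T-Abs}, \ruleref{T-Pair}, \ruleref{T-Delayed}, \ruleref{T-Forget}, or \ruleref{T-Exact}. All the others are excluded because their subject is not a value---such as applications, projections, casts, \(\mu\)-expressions, and waiting/active checks---or, in the case of \ruleref{T-Var}, because no variable is typeable in the empty context.

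For the first seven of these rules I would observe that the concluded type is manifestly \emph{not} a refinement type: it is \(\ottkw{nat}\), \(\ottkw{bool}\), a function type \(\sigma \rightarrow \tau\), or an intersection type \(\sigma \wedge \tau\). Hence \(V \models \tau\) follows at once from the second defining rule of \(\models\), whose sole side condition is that the type not be of the form \(\{x{:}\sigma \mid M\}\). These cases do not even invoke the induction hypothesis.

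The two genuinely refinement-sensitive cases, \ruleref{T-Exact} and \ruleref{T-Forget}, are where the induction hypothesis enters. For \ruleref{T-Exact}, the conclusion assigns \(V\) the refinement type \(\{x{:}\sigma \mid N\}\) from a sub-derivation \(\vdash V : \sigma\) and the side condition \(N[x\mapsto V] \longrightarrow^\ast \ottkw{true}\); applying the induction hypothesis to \(\vdash V : \sigma\) gives \(V \models \sigma\), and the first defining rule of \(\models\) then combines this with \(N[x\mapsto V] \longrightarrow^\ast \ottkw{true}\) to yield \(V \models \{x{:}\sigma \mid N\}\). For \ruleref{T-Forget}, the premise is \(\vdash V : \{x{:}\sigma \mid N\}\) and the conclusion is \(\vdash V : \sigma\); the induction hypothesis gives \(V \models \{x{:}\sigma \mid N\}\), from which I must recover \(V \models \sigma\).

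The single step that deserves care---and the only place that resembles an obstacle---is this last recovery. I would argue by inversion on the derivation of \(V \models \{x{:}\sigma \mid N\}\): since \(\{x{:}\sigma \mid N\}\) \emph{is} a refinement type, the second defining rule of \(\models\) cannot have been used, so the judgment must come from the first rule, whose leftmost premise is precisely \(V \models \sigma\). This inversion is the only point at which one analyses the structure of \(\models\) itself rather than the typing derivation, so it is worth spelling out; everything else is a direct reading-off of premises and needs no calculation.
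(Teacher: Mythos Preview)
Your proposal is correct and matches the paper's approach exactly: the paper's proof is simply ``by induction on the given derivation,'' and you have correctly filled in the case analysis, including the only non-trivial cases \ruleref{T-Exact} and \ruleref{T-Forget} and the inversion on \(\models\) needed for the latter.
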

\begin{proof}
  The proof is by induction on the given derivation. \qed{}
\end{proof}

\begin{theorem}[Value inversion]\label{prop:vi}
  If \(    \vdash  V  \ottsym{:}  \ottsym{\{}   x \mathord: \tau   \mid  M  \ottsym{\}}\), then \( M  [  x  \mapsto  V  ]   \longrightarrow^\ast  \ottkw{true}\).
\end{theorem}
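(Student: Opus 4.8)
The plan is to obtain the result as an almost immediate consequence of \propref{vigen} together with a single inversion on the \(\models\) relation. First I would apply \propref{vigen} to the hypothesis \(\vdash V : \ottsym{\{}   x \mathord: \tau   \mid  M  \ottsym{\}}\), which yields \(V \models \ottsym{\{}   x \mathord: \tau   \mid  M  \ottsym{\}}\).

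Next I would invert this derivation. The relation \(\models\) has exactly two defining rules, whose conclusions are disjoint according to the syntactic shape of the subject type: the second rule carries the side condition that its type is \emph{not} a refinement type, so it cannot justify \(V \models \ottsym{\{}   x \mathord: \tau   \mid  M  \ottsym{\}}\), whose subject is manifestly a refinement. Consequently the derivation must end with the first rule, and among its premises is exactly \( M  [  x  \mapsto  V  ]   \longrightarrow^\ast  \ottkw{true}\), which is the desired conclusion; the remaining premise \(V \models \tau\) is not needed.

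The real content of value inversion is therefore displaced into \propref{vigen}, which I take as given. Its purpose is to strengthen the statement enough to survive an induction on the typing derivation of \(V\): the bare conclusion \( M  [  x  \mapsto  V  ]   \longrightarrow^\ast  \ottkw{true}\) is not preserved across rules such as \ruleref{T-Forget}, where a refinement is stripped away, whereas the auxiliary predicate \(V \models \tau\) records satisfaction at \emph{every} layer of refinement and is supplied directly by \ruleref{T-Exact}. Given that machinery, the only point demanding care in the present theorem is confirming that the non-refinement rule for \(\models\) is genuinely inapplicable, which follows at once from the form of \(\ottsym{\{}   x \mathord: \tau   \mid  M  \ottsym{\}}\); I do not anticipate any further obstacle.
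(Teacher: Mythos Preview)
Your proposal is correct and matches the paper's own argument: the paper simply says the theorem is immediate from \propref{vigen}, and your application of that lemma followed by inversion on the \(\models\) relation is exactly the intended unpacking of that remark. Your additional commentary on why \propref{vigen} is needed as a strengthened induction hypothesis is accurate and in line with the paper's motivation for introducing it.
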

\begin{proof}
  Immediate from \propref{vigen}. \qed{}
\end{proof}

\begin{remark}
  As a corollary of value inversion, it follows that a value of an
  intersection type must be a strong pair and its elements satisfy the corresponding predicate in
  the intersection type: For example, if
  \(    \vdash  \langle  V_{{\mathrm{1}}}  \ottsym{,}  V_{{\mathrm{2}}}  \rangle  \ottsym{:}  \ottsym{\{}   x \mathord: \sigma   \mid  M  \ottsym{\}}  \wedge  \ottsym{\{}   x \mathord: \tau   \mid  N  \ottsym{\}}\), then \( M  [  x  \mapsto  V_{{\mathrm{1}}}  ]   \longrightarrow^\ast  \ottkw{true}\)
  and \( N  [  x  \mapsto  V_{{\mathrm{2}}}  ]   \longrightarrow^\ast  \ottkw{true}\).  In particular, for first-order values, every
  element of the pair is same.  That means the value satisfies all contracts
  concatenated by \(\wedge\).  For example,
  \(    \vdash  V  \ottsym{:}  \ottsym{\{}   x \mathord: \ottkw{nat}   \mid  M_{{\mathrm{1}}}  \ottsym{\}} \wedge \dots \wedge \ottsym{\{}   x \mathord: \ottkw{nat}   \mid  M_{\ottmv{n}}  \ottsym{\}}\), then
  \( M_{\ottmv{k}}  [  x  \mapsto   \mathopen\wr V \mathclose\wr   ]   \longrightarrow^\ast  \ottkw{true}\) for any \(k = 1..n\).  This is what we have
  desired for a contract written by using intersection types.
\end{remark}

Lastly, the progress property also holds.  In our setting, where \( \ottkw{pred} ( M ) \)
is partial, this theorem can be proved only after \propref{vi}.

\begin{theorem}[Progress]
  If \(    \vdash  M  \ottsym{:}  \tau\), then \(M\) is a value or \(M  \longrightarrow  \ottnt{C}\) for some
  \(\ottnt{C}\).
\end{theorem}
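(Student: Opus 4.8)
The plan is to proceed by induction on the derivation of \(  \vdash  M  \ottsym{:}  \tau\) under the empty context, showing in each case that \(M\) is a value or that some rule of \( \longrightarrow_\msansp  \cup  \longrightarrow_\msansc \) applies. The argument rests on an auxiliary \emph{canonical forms} lemma, which I would establish by a companion induction on typing derivations: a value of type \(\ottkw{nat}\) is a numeral \(\overline{n}\); a value of type \(\ottkw{bool}\) is \(\ottkw{true}\) or \(\ottkw{false}\); a value of type \(\sigma  \rightarrow  \tau\) is a \(\lambda\)-abstraction or a delayed check \(\lAngle  V  \ottsym{:}  \sigma'  \Rightarrow  \tau_{{\mathrm{1}}}  \rightarrow  \tau_{{\mathrm{2}}}  \rAngle\); and a value of type \(\sigma  \wedge  \tau\) is a strong pair \(\langle  V_{{\mathrm{1}}}  \ottsym{,}  V_{{\mathrm{2}}}  \rangle\). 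The one subtlety here is that \ruleref{T-Forget} and \ruleref{T-Exact} retype a value between \(\tau\) and \(\ottsym{\{}   x \mathord: \tau   \mid  N  \ottsym{\}}\) without changing its shape, so canonical forms must be stated for refinement types as well and threaded through those two rules.

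The introduction and run-time value forms are immediate: \ruleref{T-Zero}, \ruleref{T-True}, \ruleref{T-False}, \ruleref{T-Abs}, and \ruleref{T-Delayed} yield values, \ruleref{T-Forget} and \ruleref{T-Exact} have value subjects, and \ruleref{T-Var} is vacuous under the empty context. For \ruleref{T-Fix} the term \( \mu  f \mathord: I  . B \) always fires \ruleref{RP-Fix}. The elimination and coercion forms follow the standard pattern: by the induction hypothesis each immediate subterm is a value or steps; if a subterm steps (including to \(\ottkw{blame}\)), I lift the step through the corresponding evaluation frame via \ruleref{EP-Ctx}, \ruleref{EC-Ctx}, or \ruleref{EB-Ctx}; and if the relevant subterm is a value, canonical forms pin down its shape so that a reduction rule fires---\ruleref{RP-IsZero-T}/\ruleref{RP-IsZero-F}, \ruleref{RP-If-T}/\ruleref{RP-If-F}, \ruleref{RP-Beta}, \ruleref{RC-Proj}, and, for a delayed check applied to an argument, \ruleref{RC-Arrow} or \ruleref{RC-WedgeL/R}. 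A waiting check fires \ruleref{RC-Activate} once its subject is a value, and an active check \(\lAngle  M  \Longrightarrow  V  \ottsym{:}  \ottsym{\{}   x \mathord: \tau   \mid  N  \ottsym{\}}  \rAngle\) steps by \ruleref{EC-ActiveP}, \ruleref{EC-ActiveC}, or \ruleref{EB-Active} while \(M\) reduces and by \ruleref{RC-Succeed} or \ruleref{RC-Fail} once \(M\) is \(\ottkw{true}\) or \(\ottkw{false}\). For a cast \(\ottsym{(}  V  \ottsym{:}  \sigma  \Rightarrow  \tau  \ottsym{)}\) the key observation is that the compatibility premise \(  \mathopen\wr \sigma \mathclose\wr  \ottsym{=}  \mathopen\wr \tau \mathclose\wr  \) of \ruleref{T-Cast} makes the reduction rules exhaustive: if \(\sigma\) is a refinement, \ruleref{RC-Forget} applies; otherwise \(\sigma\) and \(\tau\) share a top-level shape, and a case split on \(\tau\) selects exactly one of \ruleref{RC-Nat}, \ruleref{RC-Bool}, \ruleref{RC-Delay}, \ruleref{RC-WedgeN}, \ruleref{RC-WedgeB}, \ruleref{RC-WedgeI}, or \ruleref{RC-Waiting}, with the side conditions ruling out both overlap and gaps.

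Two cases genuinely need the earlier lemmas. For \ruleref{T-Pred}, when the argument \(V\) is a value it has type \(\ottsym{\{}   x \mathord: \ottkw{nat}   \mid  \ottkw{if} \,  \ottkw{iszero} ( x )  \, \ottkw{then} \, \ottkw{false} \, \ottkw{else} \, \ottkw{true}  \ottsym{\}}\), so canonical forms give \(V \ottsym{=} \overline{n}\) and value inversion (\propref{vi}) gives \(\ottkw{if} \,  \ottkw{iszero} ( V )  \, \ottkw{then} \, \ottkw{false} \, \ottkw{else} \, \ottkw{true}  \longrightarrow^\ast  \ottkw{true}\); were \(V \ottsym{=} \ottkw{O}\), this predicate would instead evaluate to \(\ottkw{false}\), so in fact \(V \ottsym{=}  \ottkw{succ} ( \overline{n}' ) \) and \ruleref{RP-Pred} fires. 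This is precisely why progress must follow value inversion. For \ruleref{T-Pair}, a non-value pair must be shown to step even though \ruleref{EP-PairS} synchronizes the essential reductions of both components. The only way progress could fail is a non-value component that takes \emph{only} an essential step while its sibling is already a value; but \propref{psim} together with the (easy structural) fact that the essence of a value is \( \longrightarrow_\textsf{PCF} \)-normal contradicts \(  \mathopen\wr M \mathclose\wr  \ottsym{=}  \mathopen\wr N \mathclose\wr  \), so such a component must instead take a dynamic step. Hence \ruleref{EC-PairL}, \ruleref{EC-PairR}, \ruleref{EB-PairL/R}, or---when both components step essentially---\ruleref{EP-PairS} always applies.

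The main obstacle is not any individual reduction rule but the interaction of the strong-pair discipline with nondeterministic dynamic checking in the \ruleref{T-Pair} case: progress would fail exactly if a well-typed pair had one value component and one component stuck for essential evaluation yet not reducible by dynamic checking, and excluding this configuration is where the essence machinery (equal essences under \ruleref{T-Pair} plus \propref{psim}) does the real work. The remaining effort is bookkeeping: making the canonical forms lemma robust to the run-time retyping rules \ruleref{T-Forget} and \ruleref{T-Exact}, and checking exhaustiveness of the cast rules from \(  \mathopen\wr \sigma \mathclose\wr  \ottsym{=}  \mathopen\wr \tau \mathclose\wr  \), both of which are routine case analyses.
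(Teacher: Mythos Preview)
Your proposal is correct and follows essentially the same approach as the paper: induction on the typing derivation, with the interesting case being \ruleref{T-Pair}, where the impossibility of one component being a value while the other takes only an essential step is ruled out via \propref{psim} and the fact that essences of values are PCFv normal forms. You supply considerably more detail than the paper's sketch (explicit canonical forms, the \ruleref{T-Pred} case via value inversion, exhaustiveness of cast rules), but the key ideas and the overall structure coincide.
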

\begin{proof}
  The proof is by induction on the given derivation.  Since the evaluation
  relation is defined as combination of \( \longrightarrow_\msansp \) and \( \longrightarrow_\msansc \), the
  proof is a bit tricky, but most cases can be proven as usual.  An interesting
  case is \ruleref{T-Pair}.  We need to guarantee that if one side of a strong
  pair is a value, another side must not be evaluated by \( \longrightarrow_\msansp \) since a
  value is in normal form.  This follows from \propref{psim} and proof by
  contradiction because the essence of a \PCFvD{} value is a PCFv value and it is
  normal form. \qed{}
\end{proof}


\section{Related Work}

Intersection types were introduced in Curry-style type assignment systems by
Coppo et al.~\cite{CoppoDV1981} and Pottinger~\cite{Pottinger1980}
independently.  In the early days, intersection types are motivated by improving
a type system to make more lambda terms typeable; one important result towards
this direction is that: \textit{a lambda term has a type iff it can be strongly
  normalized}~\cite{Pottinger1980, Valentini2001}.  Then, intersection types are introduced to programming languages to enrich the descriptive power of types~\cite{Reynolds1988, BenzakenCF2003, Dunfield2007}.

\subsubsection{Intersection Contracts for Untyped Languages.}
One of the first attempts at implementing intersection-like contracts
is found in DrRacket~\cite{plt-tr2}.  It is, however, a naive
implementation, which just enforces all contracts even for functional values, and thus the semantics of higher-order intersection contracts is rather different from ours.

Keil and Thiemann~\cite{KeilT2015} have proposed an untyped calculus
of blame assignment for a higher-order contract system with
intersection and union.  As we have mentioned, our run-time checking
semantics is strongly influenced by their work, but there are two
essential differences.  On the one hand, they do not have the problem
of varying run-time checking according to a typing context; they can
freely put contract monitors\footnote{A kind of casts in their
  language.}  where they want since it is an untyped language.  On the
other hand, their operational semantics is made rather complicated due
to blame assignment.

More recently, Williams et al.~\cite{WilliamsMW2018} have proposed more sorted out
semantics for a higher-order contract system with intersection and union.  They
have mainly reformed contract checking for intersection and union ``in a uniform way''; that is, each is handled by only one
similar and simpler rule.  As a result, their presentation becomes closer to
our semantics, though complication due to blame assignment still remains.
A similar level of complication will be expected if we extend our calculus with blame assignment.

It would be interesting to investigate the
relationship between their calculi and \PCFvD{} extended with blame
labels, following Greenberg et al.~\cite{GreenbergPW2010}.

\subsubsection{Gradual Typing with Intersection Types.} Castagna and
Lanvin~\cite{CastagnaL2017} have proposed gradual typing for set-theoretic types,
which contain intersection types, as well as union and negation.  A framework of gradual typing is so close
to manifest contract systems that there is even a study unifying
them~\cite{WadlerF2009}.  A gradual typing system translates a program into an
intermediate language that is statically typed and uses casts.  Hence, they have
the same problem---how casts should be inserted when intersection types are used.  They solve the problem by
\emph{type-case} expressions, which dynamically dispatch behavior according to the type
of a value.
However, it is not clear how type-case expressions scale
to a larger language.
In fact, the following work~\cite{CastagnaLPS2019}, an extension to parametric
polymorphism and type inference, removes (necessity of) type-case
expressions but imposes instead a restriction on functions not to have an
intersection type.  Furthermore, the solution using type-case
expressions relies on strong properties of set-theoretic types.  So,
it is an open problem if their solution can be adopted to manifest
contract systems because there is not set-theoretic type theory for
refinement types and, even worse, dependent function types.

\subsubsection{Nondeterminism for Dependently Typed Languages.}
As we have noted in \secref{introduction}, \PCFvD{} has no dependent
function types.  In fact, no other work discussed in this section
supports both dependent function contracts and intersection contracts.
To extend \PCFvD{} to dependent function types, we
have to take care of their interaction with nondeterminism, which we
studied elsewhere~\cite{NishidaI2018} for a manifest calculus
\( \lambda^{H \parallel \Phi}\) with a general nondeterministic choice
operator.

A technical challenge in combining dependent function types and
nondeterminstic choice
comes from the following standard typing rule for (dependent) function applications:
\begin{prooftree}
  \AxiomC{\(\Gamma  \vdash  M  \ottsym{:}  \ottsym{(}   x \mathord: \sigma   \ottsym{)}  \rightarrow  \tau\)}%
  \AxiomC{\(\Gamma  \vdash  N  \ottsym{:}  \sigma\)}%
  \BinaryInfC{\(\Gamma  \vdash  M \, N  \ottsym{:}   \tau  [  x  \mapsto  N  ] \)}%
\end{prooftree}
The problem is that the argument $N$, which may contain
nondeterministic choice, may be duplicated in \( \tau  [  x  \mapsto  N  ] \) and, to
keep consistency of type equivalence, choices made in each occurrence
of $N$ have to be ``synchronized.''  To control synchronization,
\( \lambda^{H \parallel \Phi}\) introduces a named choice operator so that
choice operators with the same name make synchronized choice.
However, \( \lambda^{H \parallel \Phi}\) puts burden on programmers to
avoid unintended synchronization caused by accidentally shared names.

If we incoporate the idea above to \PCFvD{}, it will be natural to put
names on casts so that necessary synchronization takes place for
choices made by \ruleref{RC-WedgeL} and \ruleref{RC-WedgeR}.  It is
not clear, however, how unintended synchronization can be avoided
systematically, without programmers' ingenuity.

\section{Conclusion}\label{sec:conclusion}

We have designed and formalized a manifest contract system \PCFvD{} with
refinement intersection types.  As a result of our formal development,
\PCFvD{} guarantees not only ordinary preservation and progress but also the
property that a value of an intersection type, which can be seen as an
enumeration of small contracts, satisfies all the contracts.

The characteristic point of our formalization is that we regard a manifest
contract system as an extension of a more basic calculus, which has no software
contract system, and investigate the relationship between the basic calculus and
the manifest contract system.  More specifically, essential computation and
dynamic checking are separated.  We believe this investigation is important for
modern manifest contract systems because those become more and more complicated
and the separation is no longer admissible at a glance.

\subsubsection{Future Work.} Obvious future work is to lift the restriction we
have mentioned in \secref{introduction}.  That aside, the subsumption-free
approach is very naive and has an obvious disadvantage, that is, it requires
run-time checking even for a cast like \(\ottsym{(}  M  \ottsym{:}  \sigma  \wedge  \tau  \Rightarrow  \sigma  \ottsym{)}\), which should
be able to checked and removed at compile time.  To address the disadvantage, some
manifest contract systems provide the property known as \emph{up-cast
  elimination}~\cite{BeloGIP2011}---\textit{a cast from subtype into supertype
  can be safely removed at compile-time}.  An interesting fact is that a
well-known up-cast (subtyping) relation for a traditional intersection type
system is defined syntactically; while a usual up-cast relation for a manifest
contract system depends on semantics.  So, focusing on only the traditional
subtyping relation, the property might be proven more easily.

Towards a practice language, our cast semantics using strong pairs and
nondeterminism needs more investigation.  For the strong pairs, it will be quite
inefficient to evaluate both sides of a strong pair independently since its essence part just computes
the same thing.  The inefficiency might be reduced by a kind of sharing
structures.  For the nondeterminism, our theoretical result gives us useful
information only for successful evaluation paths; but we have not given a way to
pick up a successful one.  One obvious way is computing every evaluation path,
but of course, it is quite inefficient.


\subsection*{Acknowledgments}

We would like to thank Peter Thiemann, John Toman, Yuya Tsuda, and
anonymous reviewers for useful comments.  This work was supported in
part by the JSPS KAKENHI Grant Number JP17H01723.

\bibliographystyle{splncs04}
\bibliography{reference}

\NOTE{Up to 18 pages including bibliography}

\end{document}